\newcolumntype{z}[1]{D{.}{.}{#1}}
\definecolor{darkblue}{rgb}{0.0,0.0,0.3}
\newtheorem{definitionenv}{Definition}
\newtheorem{lemmaenv}[definitionenv]{Lemma}
\newtheorem{theoremenv}[definitionenv]{Theorem}
\newtheorem{corollaryenv}[definitionenv]{Corollary}
\newtheorem{propositionenv}[definitionenv]{Proposition}
\newtheorem{conjectureenv}[definitionenv]{Conjecture}
\newtheorem{remarkenv}[definitionenv]{Remark}
\newenvironment{remark}{\begin{remarkenv}\rm}{\end{remarkenv}}
\newcommand{\br}{\begin{remark}}
	\newcommand{\er}{\end{remark}}
\newtheorem{exampleenv}{Example}
\newtheorem{app-lemmaenv}[section]{Lemma}
\newenvironment{definition}{\begin{definitionenv}\rm}{\end{definitionenv}}
\newenvironment{lemma}{\begin{lemmaenv}\rm}{\end{lemmaenv}}
\newenvironment{theorem}{\begin{theoremenv}\rm}{\end{theoremenv}}
\newenvironment{corollary}{\begin{corollaryenv}\rm}{\end{corollaryenv}}
\newenvironment{example}{\begin{exampleenv}\rm}{\end{exampleenv}}
\newenvironment{proposition}{\begin{propositionenv}\rm}{\end{propositionenv}}
\newenvironment{conjecture}{\begin{conjectureenv}\rm}{\end{conjectureenv}}
\newenvironment{app-lemma}{\begin{app-lemmaenv}\rm}{\end{app-lemmaenv}}
\newcommand{\bd}{\begin{definition}}
	\newcommand{\ed}{\end{definition}}
\newcommand{\bl}{\begin{lemma}}
	\newcommand{\el}{\end{lemma}}
\newcommand{\elp}{\hspace*{\fill} $\Box$
\end{lemma}}
\newcommand{\bt}{\begin{theorem}}
\newcommand{\et}{\end{theorem}}
\newcommand{\etp}{\hspace*{\fill} $\Box$
\end{theorem}}
\newcommand{\bc}{\begin{corollary}}
\newcommand{\ec}{\end{corollary}}
\newcommand{\ecp}{\hspace*{\fill} $\Box$
\end{corollary}}
\newcommand{\bcj}{\begin{conjecture}}
\newcommand{\ecj}{\end{conjecture}}
\newcommand{\be}{\begin{example}}
\newcommand{\ee}{\end{example}}
\newcommand{\eep}{\hspace*{\fill} $\Box$
\end{example}}
\newcommand{\bp}{\begin{proposition}}
\newcommand{\ep}{\end{proposition}}
\newcommand{\epp}{
\end{proposition}}
\newcommandx{\yellownote}[2][1=]{\todo[inline,linecolor=yellow,backgroundcolor=yellow!25,bordercolor=yellow,#1]{#2}}
\newcommand{\wt}[1]{\mathrm{wt}\left(#1\right)}
\newcommand{\mbf}{\mathbf}
\newcommand{\cL}{{\cal L}}
\newcommand{\cM}{{\cal M}}
\begin{document}

 	\title{Enhancing Fault-Tolerant Surface Code Decoding with Iterative Lattice Reweighting}

 \author{Yi Tian}
 \thanks{These two authors contributed equally to this work.}
 \affiliation{Institute of Fundamental and Frontier Sciences,  \\ University of 
 Electronic Science and Technology of China, Chengdu    610051, China}

 \author{Y. Zheng}
 \thanks{These two authors contributed equally to this work.}
 \affiliation{Furlet Technology, Shanghai 200136, China }

 \author{Xiaoting Wang}
 \email{xiaoting@uestc.edu.cn}
 \affiliation{Institute of Fundamental and Frontier Sciences,  \\ University of 
 Electronic Science and Technology of China, Chengdu    610051, China}

 \author{Ching-Yi Lai}
\email{cylai@nycu.edu.tw}
 \affiliation{Institute of Communications Engineering, National Yang Ming Chiao Tung University, Hsinchu 30010, Taiwan}
 \date{\today}
 	
 \begin{abstract}
Efficient and realistic error decoding is crucial for viable fault-tolerant quantum computation (FTQC) on near-term quantum devices. While decoding is a classical post-processing task, its effectiveness relies heavily on accurately modeling quantum noise, which is inherently hardware-dependent.  Specifically, correlated bit-flip ($X$) and phase-flip ($Z$) errors frequently emerge under realistic circuit-level noise.
We introduce the Iterative Reweighting Minimum-Weight Perfect Matching (IRMWPM) decoder, which
systematically incorporates such correlations to enhance quantum error correction. 
Our approach  leverages fault-detection patterns to guide the reweighting: we identify correlated $X$ and $Z$ detection events and use their conditional probabilities to update weights on the primal and dual decoding lattices. 
This iterative procedure improves the decoder's ability to handle realistic error propagation in a hardware-agnostic yet noise-aware manner. 
We further prove that the IRMWPM decoder converges in finite time while preserving the distance guarantee of the original MWPM algorithm.
  Under circuit-level noise, our numerical results show substantial improvements in decoding performance.  For code distances $\geq 17$ and physical error rates $\leq 0.001$, IRMWPM reduces logical error rates by over 20× while maintaining computational efficiency with only a few iterations. It also raises the accuracy threshold from 1\% to 1.16\%, making it well-suited for near-term, real-time decoding in practical quantum computing architectures.
Extrapolated estimates based on our simulation fits suggest that to reach a logical error rate of $10^{-16}$, IRMWPM requires only distance $d=31$, whereas standard MWPM would require $d=50$, implying a substantial reduction in qubit overhead for achieving high-accuracy FTQC. These results make our method well-suited for near-term, real-time decoding in practical quantum computing architectures.

  	\end{abstract}

 	\maketitle

 	\section{Introduction}
 	Quantum noise presents a significant challenge to the realization of reliable quantum computation. To overcome this, the development of quantum error-correcting codes (QECCs) and fault-tolerant quantum computation (FTQC) is essential~\cite{aharonov1997fault, gottesman1997stabilizer, nielsen2010quantum, lidar2013quantum}. It has been rigorously established that large-scale quantum computation is achievable if the physical noise remains sufficiently local and below a critical threshold~\cite{aharonov1997fault, gottesman1997stabilizer, aliferis2005quantum, campbell2017roads}.

     Recent advances in hardware platforms---such as superconducting qubits~\cite{krinner2022realizing, zhao2022realization, AAA+23,AAA+24,lacroix2025scaling}, trapped ions~\cite{egan2021fault, ryan2021realization,HPS+22}, and neutral atom atom systems~\cite{bluvstein2022quantum, ebadi2021quantum,sales2025experimental,bluvstein2025architectural}---have enabled devices with dozens of qubits and demonstrated the key control and fidelity necessary for proof-of-principle demonstrations of fault-tolerant quantum error correction (FTQEC). Among QECCs, surface codes have emerged as leading candidates due to their local stabilizer structure, high thresholds, and amenability to scalable hardware~\cite{kitaev2003fault, bravyi1998quantum, dennis2002topological, fowler2012surface}.

     In FTQEC, syndrome extraction produces a continuous stream of data that must be decoded promptly to identify the locations and types of errors. An effective decoder can achieve code-capacity performance, enabling smaller code distances and reducing overall resource overhead. To support real-time processing and prevent backlog accumulation, decoding must keep pace with syndrome extraction rates---particularly before executing non-Clifford operations~\cite{terhal2015quantum, skoric2023parallel,tan2023scalable,caune2024demonstrating}. This necessitates a careful balance between decoder complexity and performance, which is critical for scalable  FTQC.

 	A wide range of decoders have been proposed for surface codes, including Minimum Weight Perfect Matching (MWPM)\cite{dennis2002topological, wang2011surface, fowler2013minimum, wu2023fusion, higgott2025sparse}, Union-Find (UF)~\cite{delfosse2021almost, huang2020fault}, renormalization group (RG)~\cite{DP10,duclos2010fast}, belief propagation (BP)~\cite{KL25,KL24b}, Markov chain Monte Carlo (MCMC)~\cite{hutter2014efficient}, tensor networks (TN)~\cite{darmawan2017tensor, darmawan2018linear}, and neural networks (NN)~\cite{varsamopoulos2017decoding, ni2020neural, chamberland2018deep, sweke2020reinforcement, zhang2023scalable,  chamberland2023techniques,bausch2024learning}. Each approach comes with trade-offs in accuracy, complexity, and scalability~\cite{demarti2024decoding}.

    For example, RG and BP decoders achieve $O(L^2 \log L)$ complexity for distance-$L$ codes, but both methods lack theoretical guarantees for their decoding performance at low error rates~\cite{duclos2010fast,KL25}. MCMC methods yield improved thresholds but incur higher computational costs. TN methods are effective under general noise models, but their performance under noisy syndrome extraction remains unclear. NN decoders provide fast inference after training but struggle to generalize across code sizes.

    The UF decoder offers linear-time complexity and favorable scalability for near-term deployment~\cite{delfosse2021almost, huang2020fault, das2022afs, liyanage2023scalable,barber2025real}, while MWPM remains the most widely used high-performance decoder. MWPM achieves a threshold near the optimal under circuit-level noise but exhibits high computational complexity, particularly in the presence of measurement errors. Improvements such as Sparse Blossom and Parity Blossom reduce its average-case runtime to $O(L^3)$~\cite{higgott2025sparse, fowler2013minimum, wu2023fusion}, greatly improving its practical utility on devices like 
    Field-Programmable Gate Array
    (FPGA)~\cite{micro_blossom}. 
 
    One limitation of the standard MWPM decoder is its independent treatment of $X$ and $Z$ errors, neglecting correlations that arise in noise models such as depolarizing noise.  For instance, the presence of a $Z$ error implies a 50\% chance of a co-occurring $X$ error on the same qubit, motivating the use of correlated decoding strategies~\cite{Fow13, DT14}. Note that such $X$-$Z$ correlations are prevalent in realistic noise channels of quantum memories and gates beyond depolarization noise.

We study an iterative MWPM scheme that reweights the $X$ and $Z$ decoding graphs using the most recent estimates from the dual lattice. This idea is not new: Fowler explicitly described a single reweighting step under the code-capacity noise model~\cite{Fow13}, which inspired subsequent work on iterative reweighting~\cite{yuan2022modified,iOMFC23}. However, prior approaches capture only simple correlations and do not analyze the richer 
$X/Z$ correlations induced by two-qubit CNOT errors—a dominant factor in the circuit-level noise model with noisy syndrome-extraction circuits.
Moreover, it remains unclear whether such iterative strategies preserve the decoding radius of standard MWPM. For example, recent work has shown that iterative versions of the UF decoder can violate the distance guarantees of the original UF decoder~\cite{LL25}.

In this paper, we propose a fault-tolerant iterative reweighting MWPM (IRMWPM) decoder that systematically exploits 
$X/Z$ error correlations by leveraging known fault-path structures in noisy syndrome measurements~\cite{wang2011surface}. By identifying correlated detection event patterns and calculating their conditional probabilities (Table~\ref{tb:reweighting}), we develop a principled, data-driven reweighting strategy. Our IRMWPM decoder alternates weight updates between the 3D primal and dual decoding lattices, enhancing its ability to capture realistic error propagation. We prove that IRMWPM converges in finite time, with empirical results showing that only two to four iterations are typically sufficient at practical error rates. Finally, we prove that IRMWPM preserves the decoding radius of the original MWPM decoder.

\section{Results}\label{sec:sim}
We propose the IRMWPM decoder, which exploits $X$--$Z$ correlations arising from circuit-level noise. In this section, we illustrate the decoding procedure, present its theoretical guarantees, and demonstrate its performance through numerical simulations. The complete decoding algorithm is given in Algorithm~\ref{alg:iterative_mwpm}.

\subsection{Iterative reweighting decoding strategy}

\label{sec:iterative}

Let $\mathcal{L}_X^{(j)}$ and $\mathcal{L}_Z^{(j)}$ denote the decoding lattices for $X$-type and $Z$-type errors, respectively, after $j$ rounds of reweighting for a given error syndrome.
Let $\mathcal{M}_X^{(j)}$ and $\mathcal{M}_Z^{(j)}$ be the corresponding outputs of the MWPM algorithm applied to $\mathcal{L}_X^{(j)}$ and $\mathcal{L}_Z^{(j)}$.
In particular, $\mathcal{L}_X^{(0)}$ and $\mathcal{L}_Z^{(0)}$ are the initial decoding lattices from given error syndromes $S_X$ and $S_Z$ for the $X$ and $Z$ errors, and $\mathcal{M}_X^{(0)}$ and $\mathcal{M}_Z^{(0)}$ are the initial correction outputs from the standard MWPM decoder.

The iterative decoding process proceeds as follows.
First, use $\mathcal{M}_X^{(0)}$ to reweight $\mathcal{L}_Z^{(0)}$ and obtain an updated lattice $\mathcal{L}_Z^{(1)}$.
Next, apply the MWPM algorithm to $\mathcal{L}_Z^{(1)}$ to obtain a new matching $\mathcal{M}_Z^{(1)}$.
Then, use $\mathcal{M}_Z^{(1)}$ to reweight $\mathcal{L}_X^{(0)}$ and obtain $\mathcal{L}_X^{(1)}$, followed by applying MWPM to get $\mathcal{M}_X^{(1)}$. 
This sequence constitutes one full iteration. The process is repeated, alternating between reweighting and decoding $X$-type and $Z$-type lattices,   until the decoding results converge or a predefined iteration limit is reached.
Figure~\ref{fig:IRMWPM} illustrates the IRMWPM decoding framework. 
\begin{figure}[htp]
	 \includegraphics[width=0.99\columnwidth]{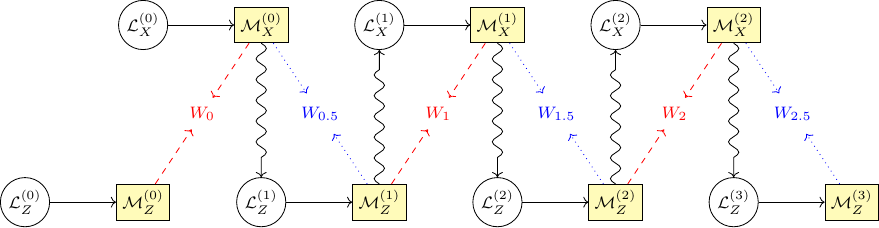}
	\caption{Illustration of the iterative   decoding framework.}
	\label{fig:IRMWPM}
\end{figure}

\subsubsection{Reweighting strategy}

The IRMWPM decoder differs from the conventional MWPM decoder by dynamically updating edge weights on the decoding lattice  based on dual-lattice matching outcomes, whereas MWPM uses fixed weights throughout decoding.

\begin{figure}[htbp]
	\centering
	\subfigure[]{\includegraphics[width=0.45\columnwidth]{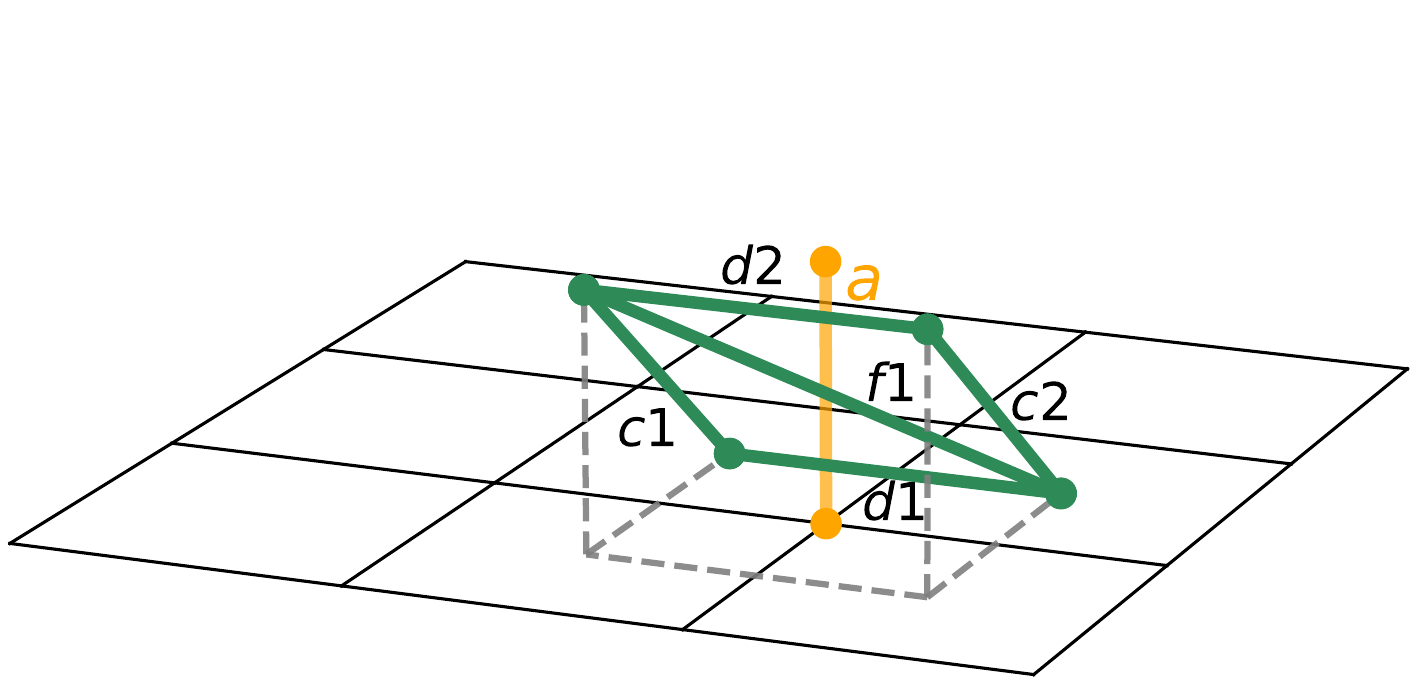}} \hfill
	\subfigure[]{\includegraphics[width=0.45\columnwidth]{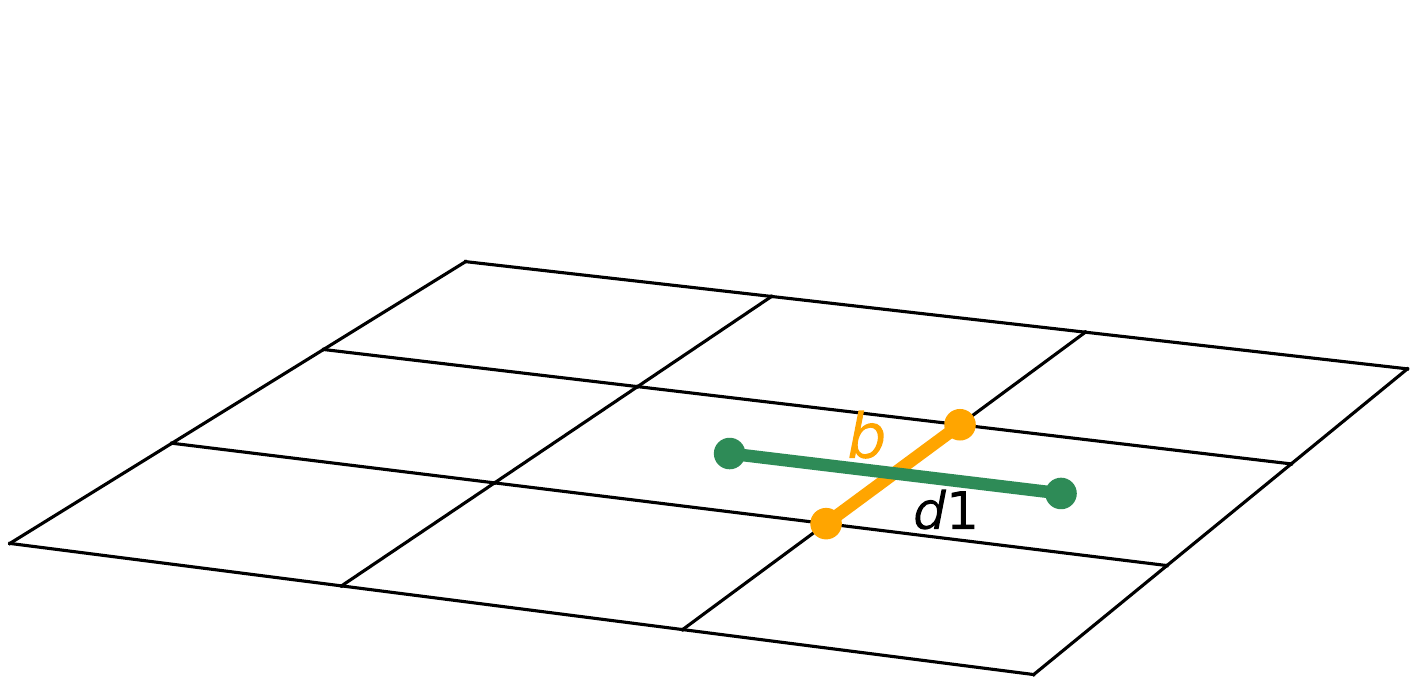}} \\[1ex]
	\subfigure[]{\includegraphics[width=0.45\columnwidth]{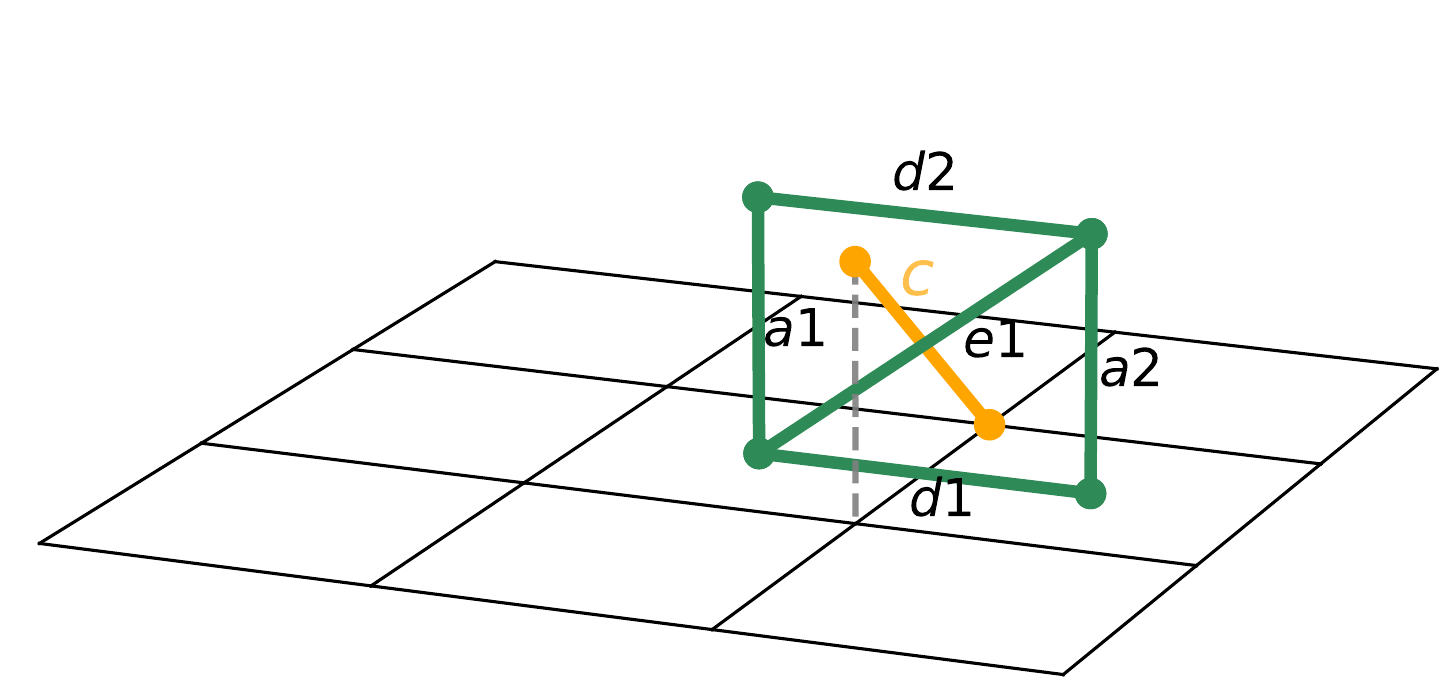}} \hfill
	\subfigure[]{\includegraphics[width=0.45\columnwidth]{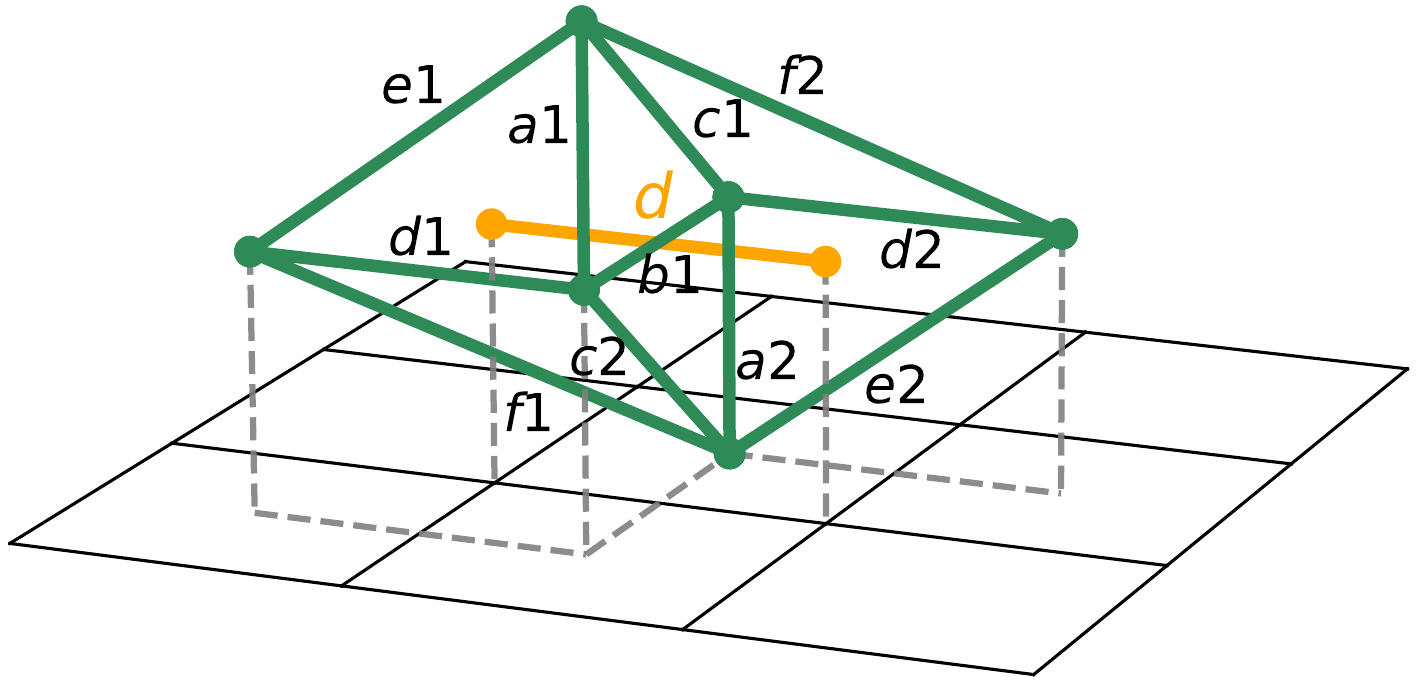}} \\[1ex]
	\subfigure[]{\includegraphics[width=0.45\columnwidth]{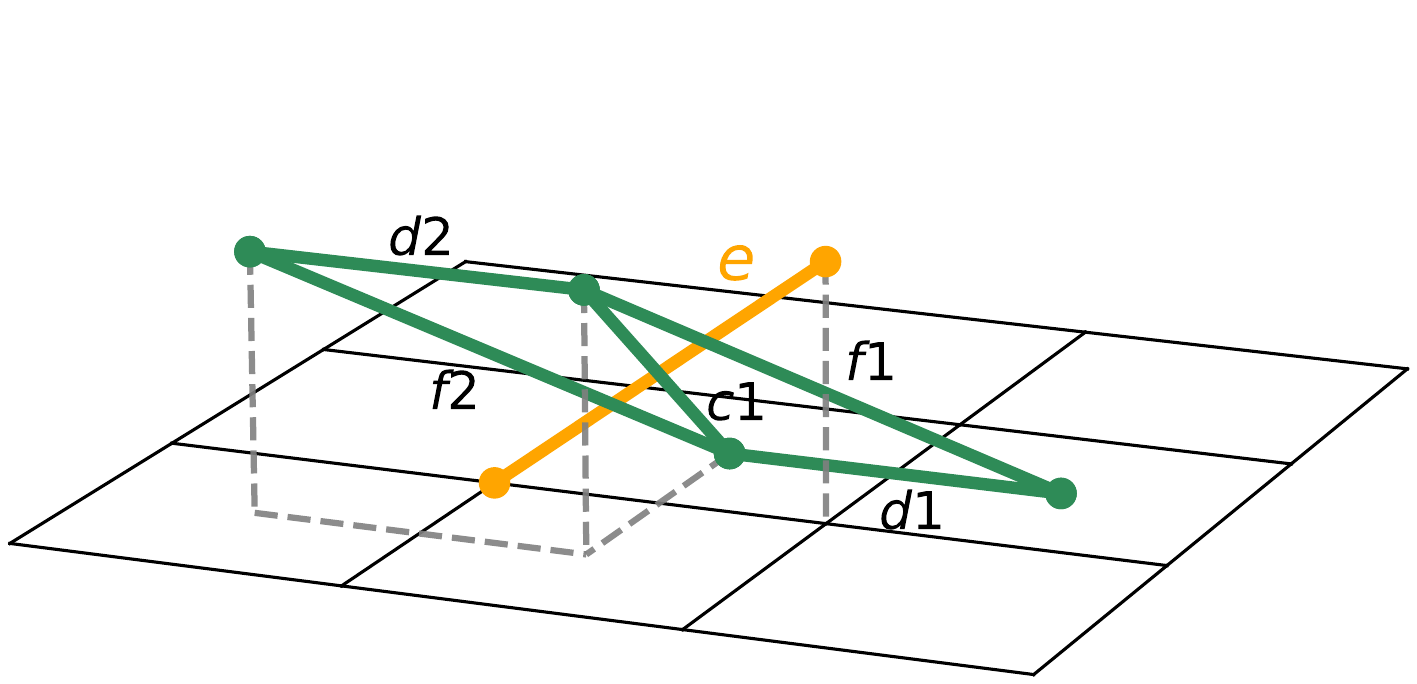}} \hfill
	\subfigure[]{\includegraphics[width=0.45\columnwidth]{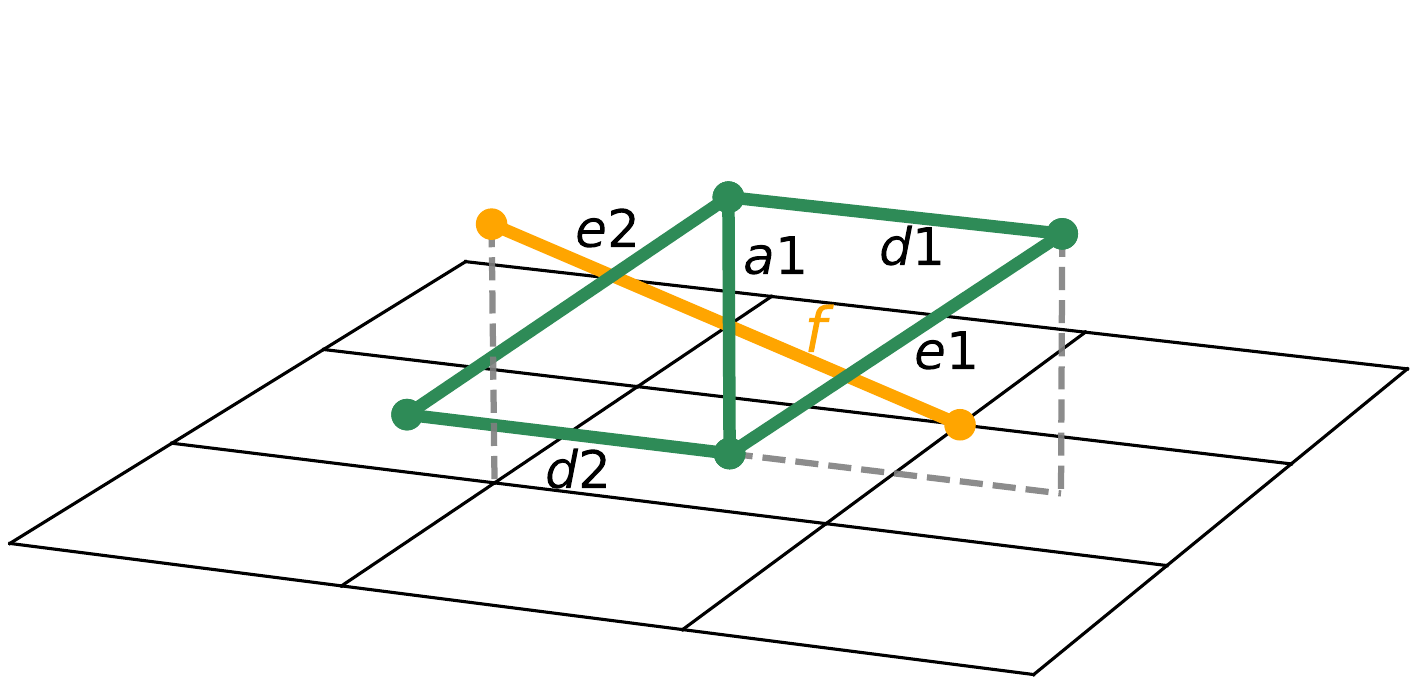}} 
	\caption{Decoding lattices  $\mathcal{L}_X$ for $L=3$. When there is an $X$ error matching (orange edge), the corresponding $Z$ error matching (green edge) in $\mathcal{L}_Z$ needs to be reweighted.}
	\label{fig5}
\end{figure}

The $X$ and $Z$ decoding lattices are dual to each other.
In the iterative reweighting procedure,  the edge weights in one lattice are updated based on the decoding outcomes of its dual.
To exploit realistic fault correlations in circuit-level noise, we incorporate information about fault-induced correlations between detection events on the $X$ and $Z$ decoding lattices, as identified in Wang et al.~\cite{wang2011surface}. Specifically, we classify and group detection event matchings that arise from single faults, such as CNOT errors, data qubit errors, or measurement faults, as shown in Fig.~\ref{fig5}.

Each single fault may produce simultaneous detection events on both the $X$ and $Z$ decoding lattices. For instance, a single $Y$ error on a data qubit can trigger correlated detection events corresponding to matching edges in both lattices. These correlated edge patterns are systematically categorized into six types (a–f), each of which appears as a characteristic “thick edge” in the decoding graph. The resulting correlated pairs of detection events are matched across primal and dual lattices, forming fault-path structures that can be used to compute conditional probabilities of dual lattice matchings given observed primal events.

Using this classification, we identify detection event pairs that frequently co-occur due to a common fault source and use their empirical joint and conditional probabilities to reweight edges in the dual decoding lattice. This principled, data-driven approach allows us to iteratively update the edge weights between the primal and dual lattices during decoding, improving the MWPM decoder’s ability to capture realistic error propagation patterns.
These conditional probabilities are summarized in Table~\ref{tb:reweighting} (Section~\ref{sec:method}).

\subsubsection{Theoretical Guarantees of IRMWPM}

Define $\hat{E}^{(j)}$ as the Pauli correction associated with $\cM_X^{(j)}$ and $\cM_Z^{(j)}$, and let $\hat{E}^{(j+0.5)}$ denote the correction associated with $\cM_X^{(j)}$ and $\cM_Z^{(j+1)}$.
For each integer $j \geq 0$, define $W_j = \wt{\hat{E}^{(j)}}$ and $W_{j+0.5} = \wt{\hat{E}^{(j+0.5)}}$ as the total weights of the corresponding corrections. 
 These quantities are also illustrated in Fig.~\ref{fig:IRMWPM}.
 
We prove in Lemma~\ref{lemma:nonincreasing} (Section~\ref{sec:method}) that the sequence ${W_j}$ is nonincreasing and stabilizes in finite time. This leads to the following result:

\begin{theorem} \label{thm:convergence}
The IRMWPM decoder converges in finite time.
\end{theorem}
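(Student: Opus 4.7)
The plan is to derive the theorem directly from Lemma~\ref{lemma:nonincreasing}, which does most of the work by asserting that the sequence of correction weights
\[
W_0 \ge W_{0.5} \ge W_1 \ge W_{1.5} \ge \cdots \ge 0
\]
is nonincreasing and stabilizes after finitely many half-steps. The remaining task is to translate stabilization of this potential into termination of the IRMWPM algorithm itself.

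First I would make the convergence criterion precise: IRMWPM is said to have converged at iteration $j$ when $W_{j+1} \ge W_j$, which by monotonicity is equivalent to $W_{j+1} = W_j$. This coincides with the natural halting rule of Algorithm~\ref{alg:iterative_mwpm}, namely that no further weight reduction is detected in a fresh reweighting-and-matching cycle. Applying Lemma~\ref{lemma:nonincreasing} then immediately yields the existence of a finite such index, which is the entire content of the theorem.

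Next I would quantify the finiteness so as to give an explicit bound rather than a bare existence statement. Each edge weight appearing in $\cL_X^{(j)}$ or $\cL_Z^{(j)}$ is a log-likelihood determined by (i) the fixed physical-error probabilities and (ii) the finite reweighting table (Table~\ref{tb:reweighting}) associated with the six single-fault configurations of Fig.~\ref{fig5}. Since both inputs are finite, only finitely many distinct edge weights can ever arise across iterations, hence $W_j$ lives in a discrete subset of $\mR_{\ge 0}$. Combined with monotonicity, this bounds the number of strict decreases by $W_0/\delta$, where $\delta>0$ is the minimum positive gap between admissible total-weight values, giving a concrete finite convergence time.

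The main subtlety I expect, and the one I would spend most care on, is ensuring that a weight-stabilized step is a genuine fixed point of the iteration rather than a degenerate tie. Because MWPM solutions are not unique in the presence of equal-weight matchings, one must rule out a scenario where $W_{j+1}=W_j$ yet $\cM_X^{(j+1)}$ or $\cM_Z^{(j+1)}$ differ from their predecessors, potentially cascading through further reweighting. I would handle this either by fixing a deterministic tie-breaking rule inside the MWPM solver---so that identical inputs produce identical outputs, and the algorithm literally halts once the weight plateaus---or, more cleanly, by adopting weight-stabilization itself as the operational definition of convergence. Under either convention the theorem reduces to Lemma~\ref{lemma:nonincreasing} plus a brief bookkeeping argument about the halting rule.
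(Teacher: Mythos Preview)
Your proposal is correct and follows essentially the same line as the paper: invoke Lemma~\ref{lemma:nonincreasing} for monotonicity of the $W_j$, then use discreteness to conclude finite-time stabilization. The paper's version is even terser---it simply notes that $W_j=\wt{\hat E^{(j)}}$ is a nonincreasing sequence of nonnegative \emph{integers} (Pauli weights), which already forces stabilization without your more elaborate argument via the finite reweighting table; your discussion of tie-breaking and the distinction between weight stabilization and matching repetition is a worthwhile refinement that the paper does not address.
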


In practice, our simulations show rapid convergence, typically within 2–4 iterations for realistic error rates in the range of $10^{-3}$ to $10^{-4}$.

We prove that the IRMWPM decoder achieves the same decoding radius as the standard MWPM decoder, thereby preserving its distance guarantee.
\begin{theorem}[IRMWPM]
Let $E$ be an $n$-qubit Pauli error with weight at most $ \left\lfloor \frac{d-1}{2} \right\rfloor$. 
If $E$ is correctable by the MWPM decoder on a surface code of distance~$d$ under either the code capacity or circuit-level noise model, 
then $E$ is also correctable by the IRMWPM decoder, 
assuming that the final round of syndrome measurements is perfect.
\end{theorem}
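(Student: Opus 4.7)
The plan is to combine Lemma~\ref{lemma:nonincreasing} (nonincreasing weights) with the standard distance argument for MWPM. Let $j^*$ be the iteration at which IRMWPM terminates, and write $\hat{E}^{(0)}$ for the initial MWPM correction and $\hat{E}^{(j^*)}$ for the IRMWPM correction at termination. Because every MWPM call on a reweighted lattice returns a correction consistent with the input syndrome, both $\hat{E}^{(0)}$ and $\hat{E}^{(j^*)}$ share the syndrome of $E$.

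First I would invoke the hypothesis that MWPM corrects $E$, which yields $\hat{E}^{(0)} E \in \cS$, where $\cS$ is the stabilizer group. Define $F := \hat{E}^{(0)}\hat{E}^{(j^*)}$; this product has trivial syndrome and hence is either in $\cS$ or a nontrivial logical operator. In the former case, $\hat{E}^{(j^*)}E = F\cdot \hat{E}^{(0)}E\in\cS$ and IRMWPM also corrects $E$, finishing the proof. Thus the whole task reduces to ruling out the possibility that $F$ is a nontrivial logical operator, which would force $\wt{F}\geq d$ by the definition of the code distance.

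Next I would bound $\wt{F}$ directly. By the triangle inequality, $\wt{F}\leq \wt{\hat{E}^{(0)}}+\wt{\hat{E}^{(j^*)}}=W_0+W_{j^*}$. Lemma~\ref{lemma:nonincreasing} supplies $W_{j^*}\leq W_0$, and the minimum-weight property of the standard MWPM decoder, together with the fact that $E$ itself realises a syndrome-consistent correction of weight $\wt{E}$, gives $W_0\leq \wt{E}\leq \lfloor (d-1)/2\rfloor$. These estimates combine to $\wt{F}\leq 2\lfloor(d-1)/2\rfloor\leq d-1<d$, contradicting $\wt{F}\geq d$. Hence $F\in\cS$ and the theorem follows.

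The main subtlety I expect lies in making the inequality $W_0\leq \wt{E}$ rigorous under circuit-level noise, where MWPM uses nonuniform weights $-\log p_i$ on a 3D decoding graph that also includes time-like edges tracking measurement faults. In the code-capacity model this inequality is immediate from the uniform weighting. For circuit-level noise, the assumption that the final syndrome round is perfect is essential: it allows the 3D matching to be projected cleanly onto a data-qubit correction whose Hamming weight is dominated by that of the underlying fault pattern $E$. Formalising this projection, together with pinning down the convention $\wt{\cdot}$ in each noise model so that the same bound governs both $W_0$ and $W_{j^*}$, constitutes the main technical hurdle I would need to negotiate.
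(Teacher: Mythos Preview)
Your proposal is correct and follows essentially the same route as the paper: bound $\wt{\hat{E}^{(0)}}\le\wt{E}\le\lfloor(d-1)/2\rfloor$ by MWPM optimality, invoke Lemma~\ref{lemma:nonincreasing} to bound the later iterate, then use the triangle inequality plus the code-distance argument to conclude the two corrections differ by a stabilizer. Your flagged subtlety about justifying $W_0\le\wt{E}$ under circuit-level noise with nonuniform edge weights is genuine and is precisely the point the paper also does not fully resolve (it simply asserts MWPM optimality and notes separately that the distance guarantee for MWPM under circuit-level noise is expected but not formally proven).
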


 \subsection{Numerical Simulations}

We numerically evaluate the performance of the proposed IRMWPM decoder under the standard symmetric circuit-level depolarizing noise model, focusing on its impact on logical error rates, threshold behavior, and decoding efficiency.

The simulation proceeds as follows: 
For a distance-$L$ surface code, we assume noisy syndrome extraction (SE) with circuit-level noise in each round. After performing $L$ rounds of SE, a decoder is applied using the accumulated syndromes.
 In addition, every $T$ SE cycles, a virtual decoding step is performed using an ideal decoder that assumes one round of perfect measurements and operates on a 2D lattice to check for logical errors. If no logical error is detected, noisy SE continues.
This process repeats until a logical error occurs, and we record the total number of SEs. We repeat this procedure to compute the average logical qubit lifetime. In the following simulations, we set $T=L$.
Similar lifetime simulations were conducted in~\cite{KL24b} using BP decoding. We show that IRMWPM achieves better performance.
 
In a single run of IRMWPM, iterations stop once the decoding result repeats (i.e., the same result is obtained in two consecutive iterations), which typically occurs within 2–4 iterations in our simulations. Correction weight stabilization can also serve as a secondary stopping criterion if perfect convergence is desired.

Benchmark simulations are conducted using surface codes of distance $L = 5, 7, 9, 13, 17$, with a decoding period of $T = L$ rounds of syndrome extraction.

\begin{figure}[htbp]
	\centering
	\subfigure[]{\includegraphics[width=1\columnwidth]{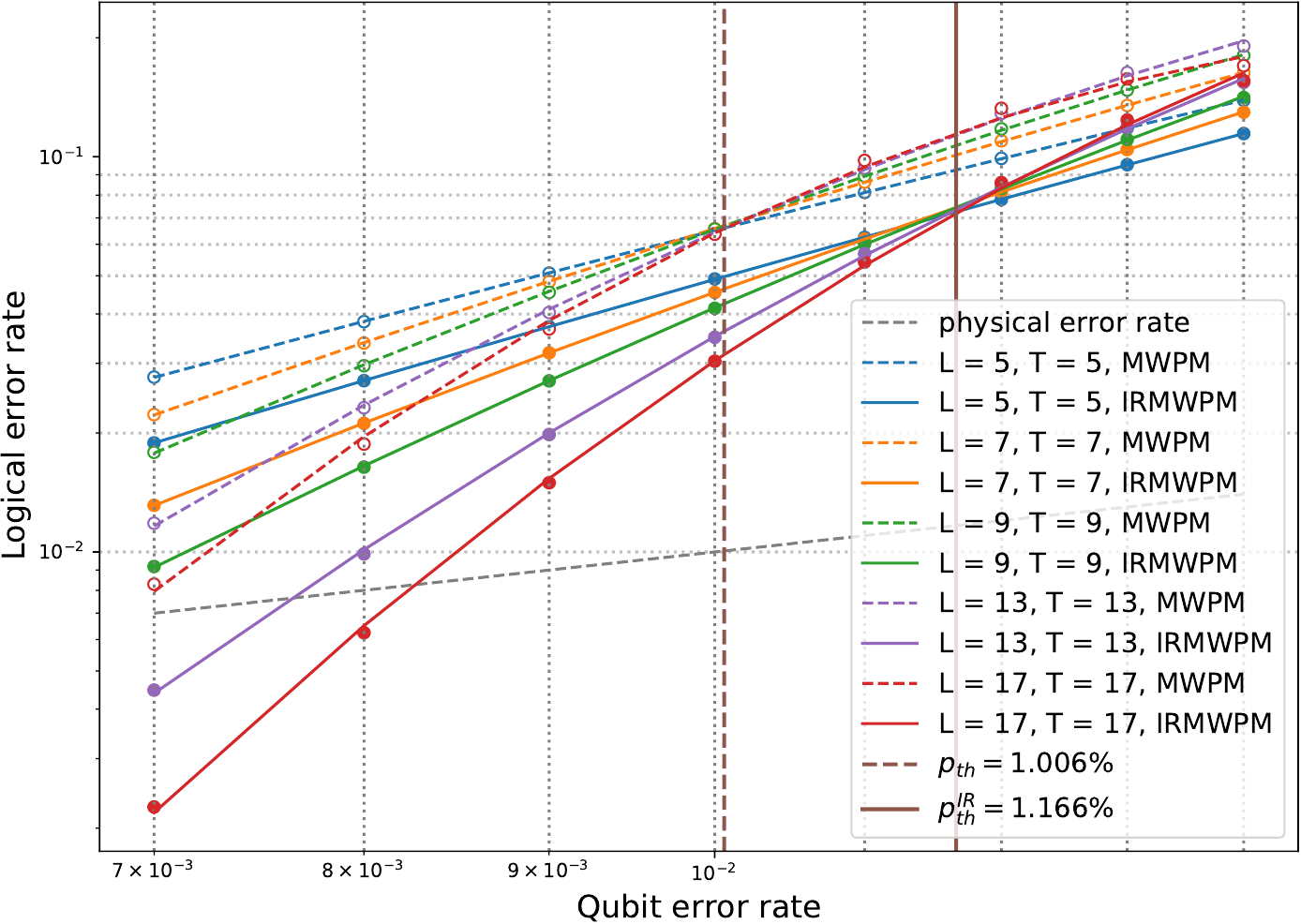}}
	\caption{Comparison of decoding thresholds achieved by the MWPM and IRMWPM decoders.}
	\label{mwpm_threshold}
\end{figure}

The decoding threshold denotes the physical error rate below which the logical error rate can be suppressed arbitrarily by increasing the surface code distance. As shown in Fig.~\ref{mwpm_threshold}, the standard MWPM decoder yields a threshold of approximately 1\%, while the  IRMWPM decoder achieves a higher threshold of about 1.16\%, representing a 16\% improvement.

\begin{figure}[htbp]
	\centering
	\includegraphics[width=1\columnwidth]{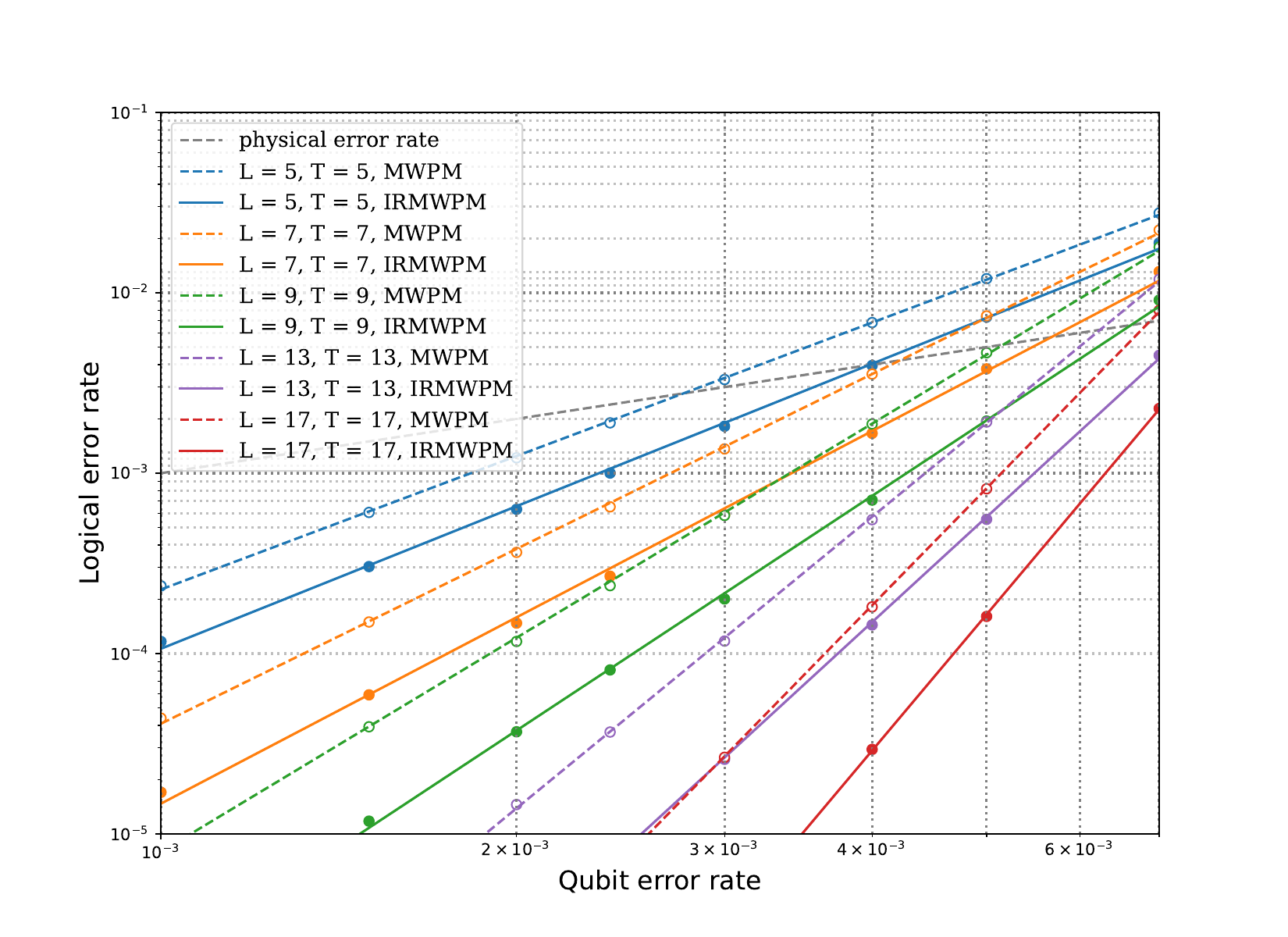}
	\caption{Decoding performance of MWPM and IRMWPM decoders at low physical error rates.}
	\label{mwpm_rate}
\end{figure}

\begin{figure}[htbp]
	\centering
	\includegraphics[width=0.85\columnwidth]{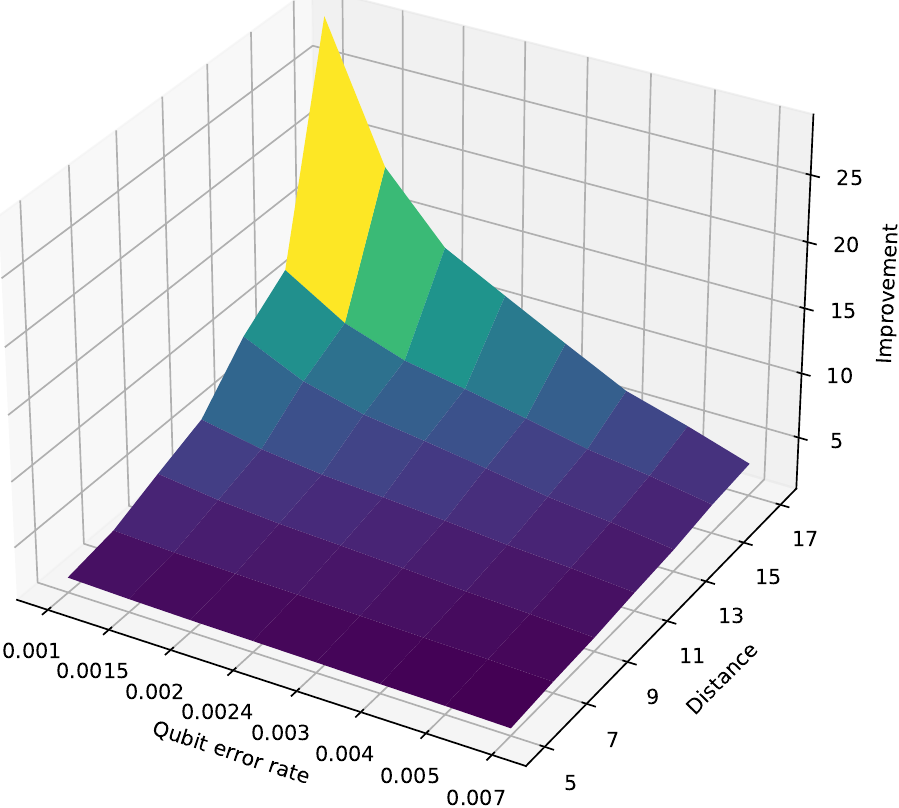}
	\caption{Impact of iterative lattice reweighting on decoding performance across different physical error rates and surface code distances.}
	\label{mwpm_improvement}
\end{figure}

\begin{figure}[htbp]
	\centering
	\subfigure[]{\includegraphics[width=0.9\columnwidth]{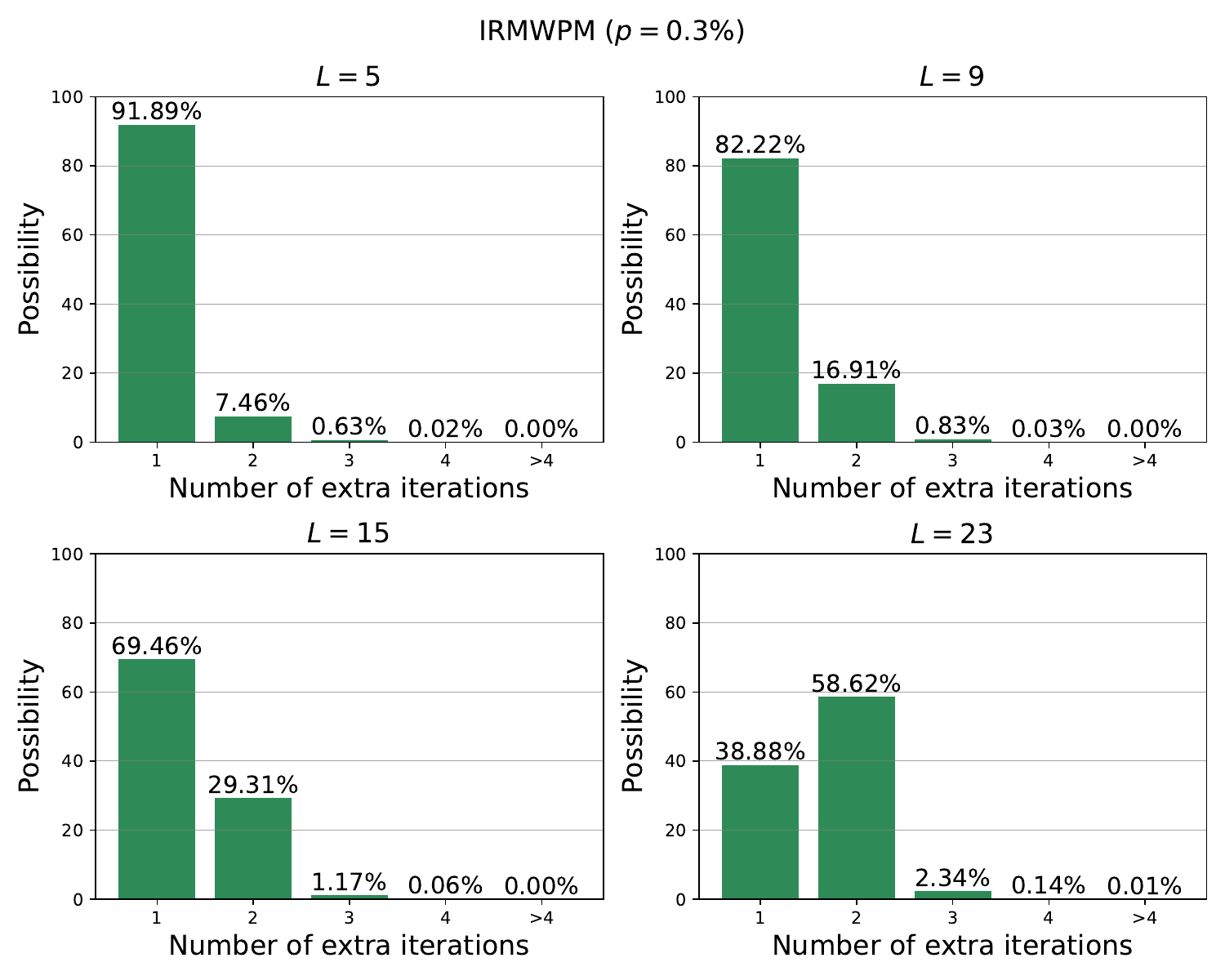}}\\[1ex]
	\subfigure[]{\includegraphics[width=0.9\columnwidth]{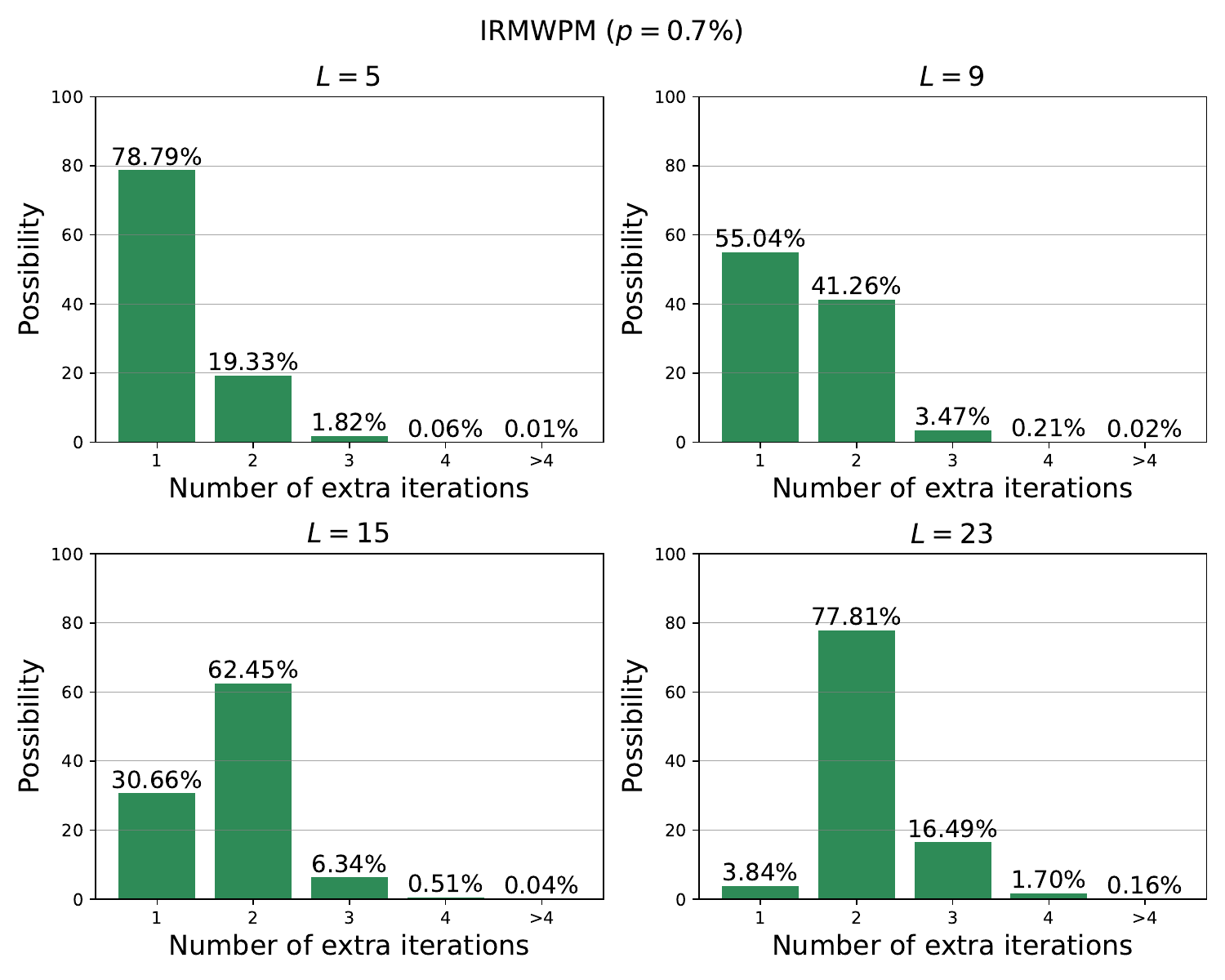}}
	\caption{Number of additional iterations required for convergence under various physical error rates and surface code distances in the IRMWPM decoder. }
	\label{iter}
\end{figure}

Next, we examine the impact of iterative lattice reweighting on the decoding error rate of MWPM decoders, particularly in the low physical error rate regime. As illustrated in Fig.~\ref{mwpm_rate}, iterative reweighting significantly reduces the decoding error rate. For example, at a physical error rate of $p = 0.24\%$ and a surface code distance of $L = 13$, the IRMWPM decoder achieves an approximately 80\% reduction in decoding error rate compared to the standard MWPM decoder.

Here, we quantitatively evaluate the improvement in the performance of the MWPM decoder using iterative lattice reweighting across physical error rates $p$ and surface code distances $L$. As shown in Fig.~\ref{mwpm_improvement}, 
the advantage of this iterative approach becomes more significant at lower $p$ and larger $L$, 
leading to a more substantial reduction in decoding error rates.
To estimate the logical error rates for both MWPM and IRMWPM decoders, we extrapolate from the data in Fig.~\ref{mwpm_rate}, which is fitted using the following general equation:
\begin{equation}
 P_\textrm{L}(p,L) = 10^{aL^2+bL+c}p^{eL^2+fL+g},   
\end{equation}
where $a, b, c, e, f,$ and $ g$ are fitted parameters. Extrapolating to $L = 31$ and $p = 0.001$ (a typical regime for large-scale FTQC running Shor’s algorithm), we find $P_\textrm{L} \approx 3\times10^{-12} $ for MWPM and $3\times10^{-16}$ for IRMWPM, respectively, showing a reduction of approximately four orders of magnitude. To match the same logical error rate using standard MWPM, a surface code of distance $L\geq50$ is needed. Thus, IRMWPM offers a physical qubit saving of over 60\% per logical qubit.

 While iterative reweighting enhances decoding performance, it introduces additional computational cost due to multiple iterations. Assessing convergence requirements is thus critical: as shown in Fig.~\ref{iter}, the IRMWPM 
 decoder averages only a few extra iterations, 
 
 For $L=23$, four iterations are sufficient for $99\%+$ of cases at $p=0.3\%$.
 When operating at the lower physical error rates,   this average would be significantly less, 
keeping the overall computational complexity comparable to standard MWPM.

\section{Discussion}

This work presents the IRMWPM decoder as a principled and scalable enhancement MWPM decoding under circuit-level noise. By systematically incorporating known fault-path correlations into an iterative lattice reweighting process, IRMWPM improves decoding accuracy while preserving the distance guarantee of the original decoder.
In contrast to previous reweighting strategies, which either lack theoretical justification or are limited to code capacity models, our approach is grounded in well-characterized detection event patterns and fault mechanisms, enabling interpretable and data-driven reweighting on the 3D decoding lattice. We alternately update weights between the primal and dual grids, allowing the decoder to capture realistic error propagation better.

Numerical simulations demonstrate a 16\% increase in accuracy threshold and up to an 80\% reduction in logical error rate at low physical error rates, all while requiring only a few additional iterations. This efficiency makes IRMWPM particularly suitable for near-term, real-time deployment in fault-tolerant quantum architectures.


IRMWPM converges quickly in practice, as shown by our simulations. We formally prove its finite-time convergence by demonstrating that the total weight of the $X$ and $Z$ error estimates constitutes a non-increasing sequence of non-negative integers. This not only guarantees convergence but also yields a straightforward, weight-based termination criterion.  This leads to an open theoretical challenge: proving IRMWPM's convergence within $O(\textrm{poly}(\log L))$.

These results highlight the practical value of leveraging correlated errors in decoding, without sacrificing theoretical robustness. Future directions include extending this framework to other topological codes, exploring approximate fault models for broader applicability. Given the significant progress in deploying MWPM on FPGAs to achieve microsecond-level decoding latency for code length \( L = 13 \)~\cite{micro_blossom}, IRMWPM, which incurs lower overhead than MWPM, holds promise for high-performance real-time quantum error correction when implemented on FPGAs or ASICs.

The software developed in this paper is available at Github~\cite{FTIR25}.

\section{Method} 	
\label{sec:method}

\subsection{Preliminaries} \label{sec:preliminaries}

\subsubsection{Pauli operators}
We consider errors that are tensor products of Pauli operators $I,X,Y,Z$.
A single-qubit Pauli operator can be expressed as $X^a Z^b$ for $a, b \in \{0,1\}$, up to  a global phase, where $X^0=I=Z^0$, $X^1=X$, and $Z^1=Z$.
In general, an $n$-fold Pauli operator can be  expressed as $X^{\mbf{a}} Z^{\mbf{b}}\triangleq \bigotimes_{j=1}^n X^{a_j}Z^{b_j}$,
where $\mbf{a}= a_1\cdots a_n $ and $\mbf{b}= b_1\cdots b_n $, up to a global phase.
For simplicity, a Pauli operator $E$ is written as $E = E_X E_Z$, where $E_X$ and $E_Z$ denote its $X$ and $Z$ components, respectively.

The support of a Pauli error $E$ is the set of qubits where $E$ acts nontrivially.
Let $\wt{E}$ denote the weight of a Pauli operator $E$, defined as the number of its nontrivial components, which is the size of its support.  
Additionally,  we use $\wt{\mbf{a}}$ to denote the Hamming weight of the binary vector $\mbf{a}$.

\subsubsection{Surface code}
 
 The surface code is a quantum error-correcting code implemented on a two-dimensional lattice of physical qubits.
 We focus on the $[[L^2 + (L - 1)^2, 1, L]]$ surface code~\cite{bravyi1998quantum}, which encodes a single logical qubit into $L^2 + (L - 1)^2$ physical qubits arranged on a lattice of length $L$, as illustrated in Fig.~\ref{fig1} for the case of $L = 3$.
 
 \begin{figure}[htbp] 
 	\centering 
 	\includegraphics[width=0.5\columnwidth]{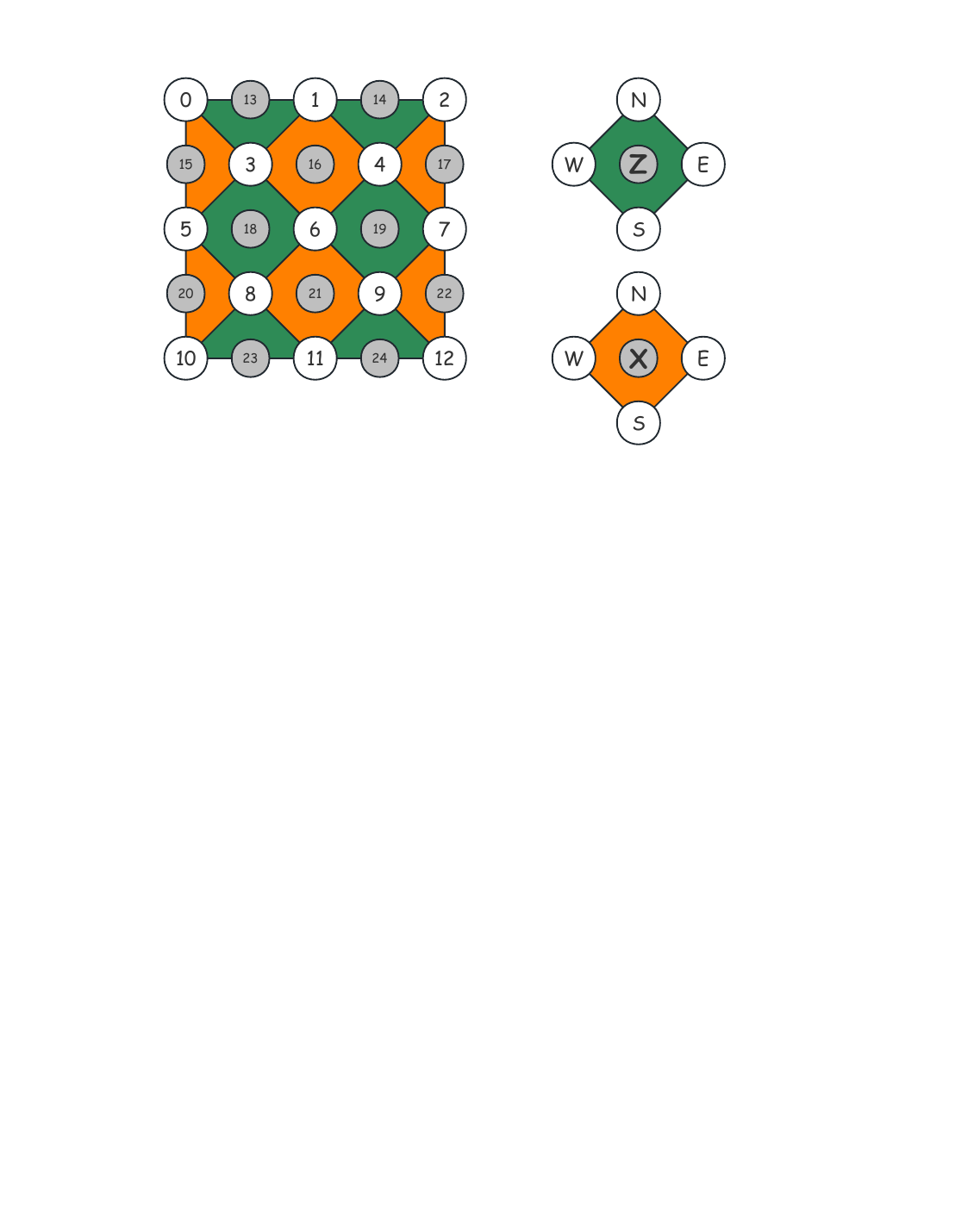}
 	\caption{Implementation of a distance-3 surface code on a two-dimensional lattice, where gray circles denote measurement qubits and white circles represent data qubits.}
 	\label{fig1}
 \end{figure}

In this layout, physical qubits are categorized into two types: measurement qubits, shown as gray circles in Fig.~\ref{fig1}, and data qubits, represented by white circles.
Each data qubit participates in at least two stabilizer generators, indicated by the orange and green color blocks. Each block corresponds to a measurement qubit and its adjacent data qubits, forming a Pauli stabilizer of weight 3 or 4.

The parameter $L$ also defines the minimum distance of the surface code, which is the minimum weight of a Pauli error that can cause a logical failure.
For example, a horizontal $X$ string acting on qubits 10, 11, and 12 constitutes a logical $X$ error.

An $X$ stabilizer   is defined by the tensor product of $X$ operators acting on the neighboring data qubits of a measurement qubit in an orange block, with $Z$ stabilizers defined similarly. Measuring these stabilizers determines the parity of the corresponding data qubits in the $Z$ or $X$ basis, as shown in Fig.~\ref{fig2}. The measurement outcomes, known as \textit{error syndromes}, help identify both the type and location of errors on the data qubits. Each stabilizer measurement involves initializing a measurement qubit, applying four CNOT gates with adjacent data qubits, and performing a final measurement. 
This process is also referred to as \textit{syndrome extraction} (SE).

For more efficient practical simulations, we adopt a reduced-depth implementation of the SE circuit consisting of five steps,  by removing the $I$ and $H$ gates and instead preparing ancilla qubits in the $\ket{0}$ and $\ket{+}$ states, with measurements performed in the $Z$ and $X$ bases.

\begin{figure}[htbp] 
	\centering 
	\includegraphics[width=0.28\columnwidth]{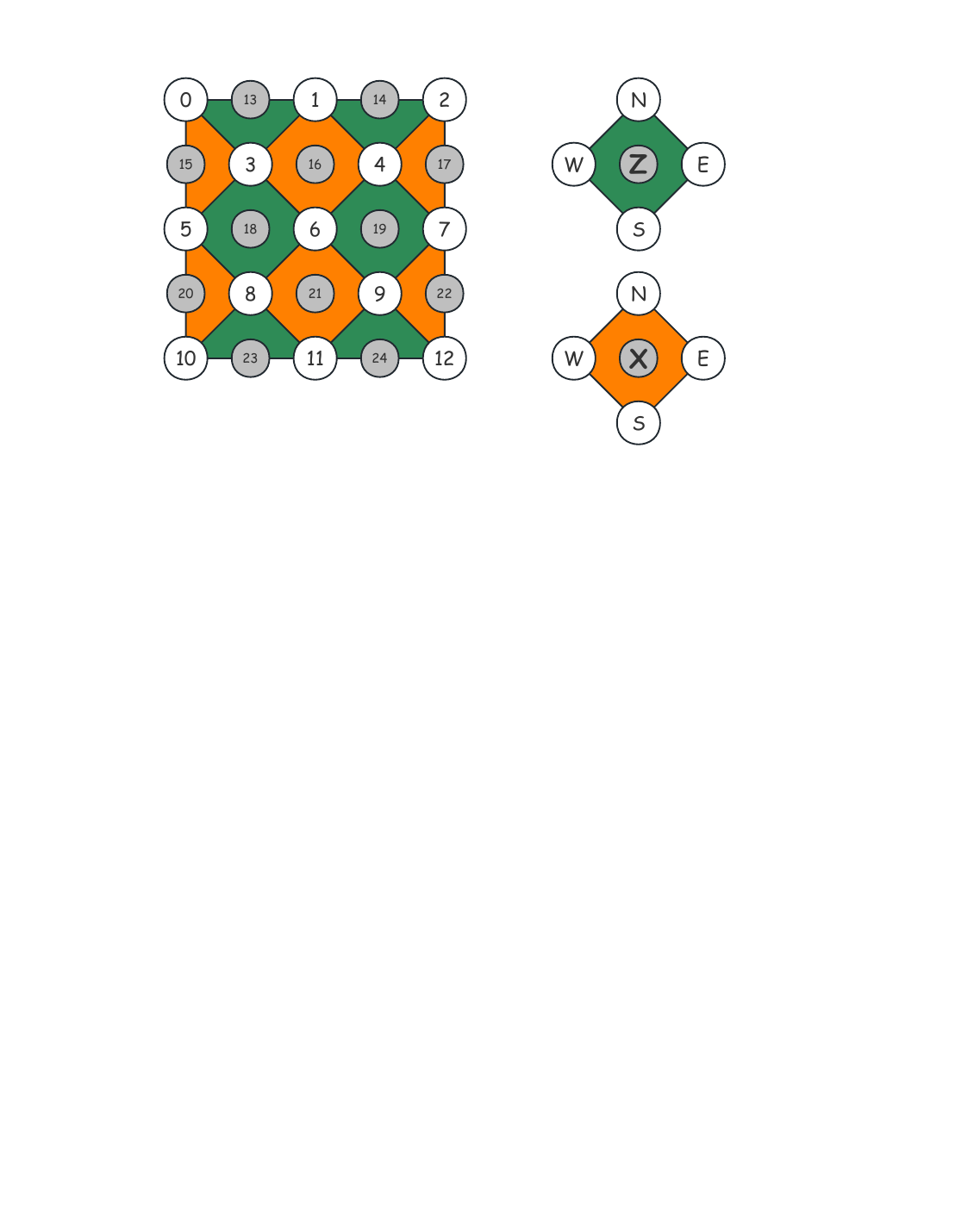}
	\includegraphics[width=0.6\columnwidth]{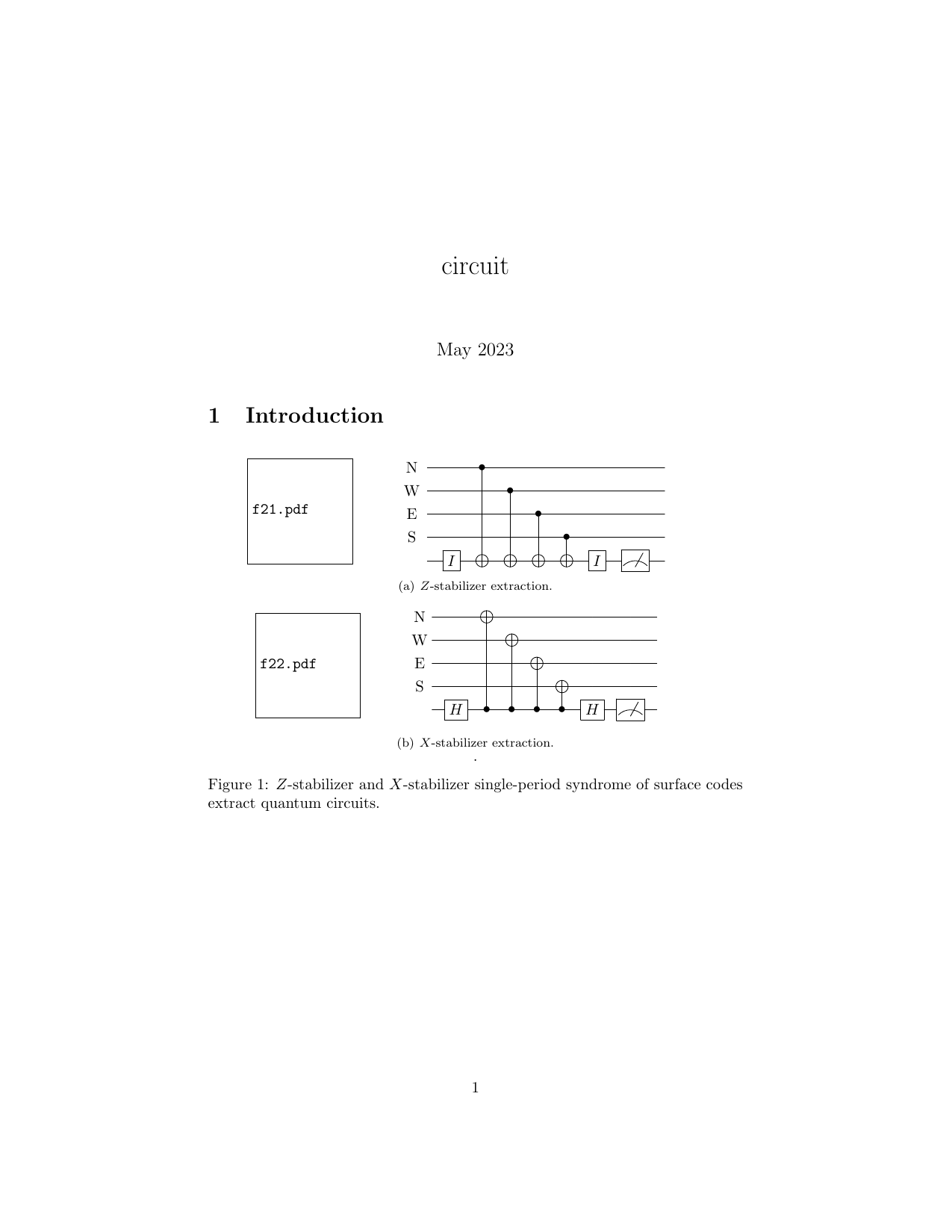}
	\includegraphics[width=0.28\columnwidth]{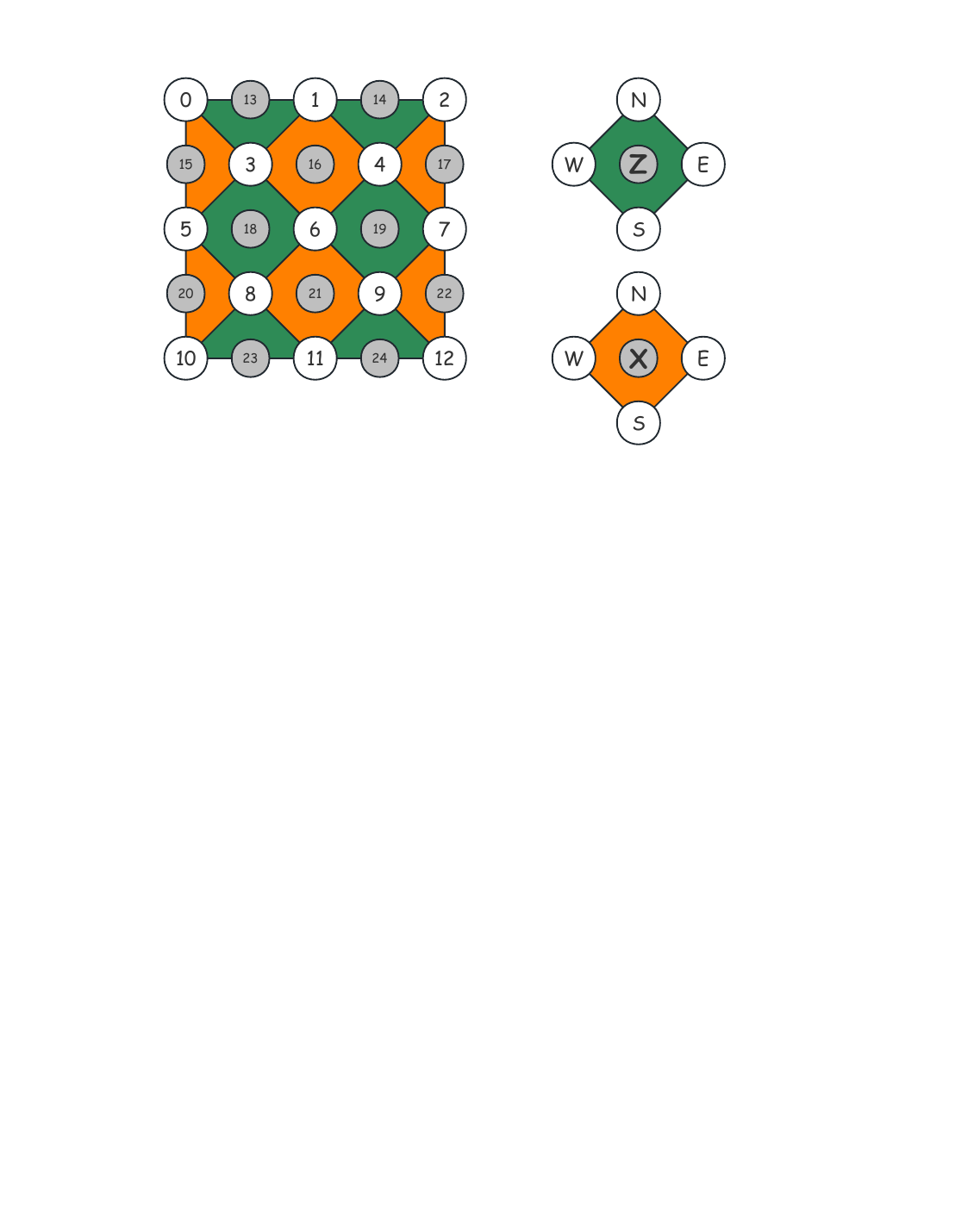}
	\includegraphics[width=0.6\columnwidth]{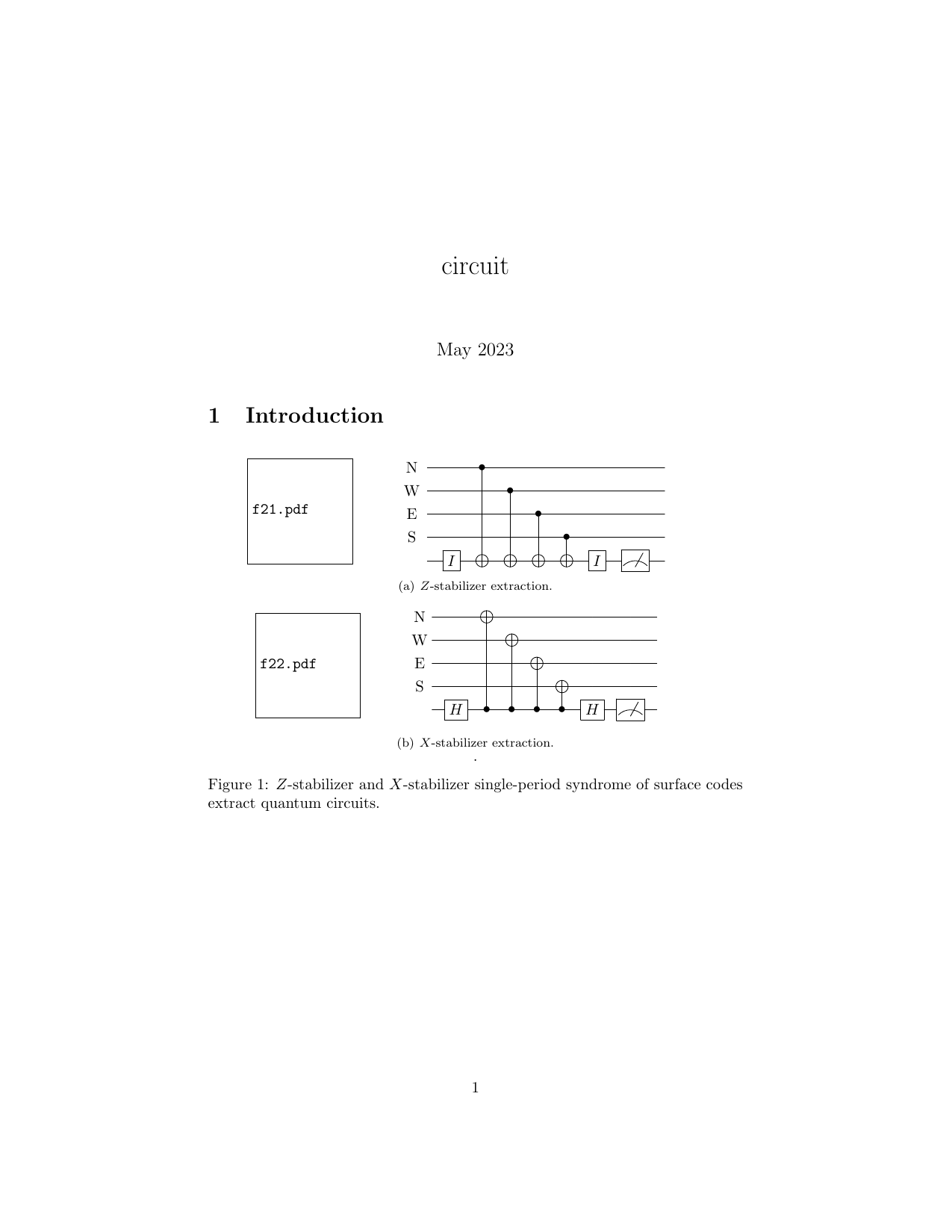}
	\caption{Standard $Z$- and $X$-stabilizer syndrome extraction circuits for the surface code. N, W, E, and S denote the data qubits located to the north, west, east, and south of the measurement qubits, respectively.
		Weight-three stabilizers can be measured in a similar manner.}
	\label{fig2}
\end{figure}

\subsubsection{Noise model}\label{sec:noise_model}

We define the terminology and noise model as follows.

A fault refers to a malfunction or imperfection in a physical operation or process, while an error is the unintended deviation in the quantum state resulting from such a fault. For example, a single-qubit gate fault  may introduce a Pauli error on the affected qubit.

In this work, we adopt a standard circuit-level depolarizing noise model for gate and measurement faults in SE, where $p$ denotes the fault rate~\cite{wang2011surface}. Our syndrome extraction circuits include the following three types of faults:

\begin{enumerate}[(1)]
	\item %
	Following the execution of a single-qubit identity $I$ or the Hadamard gate $H$, a Pauli error from  $\{X, Y, Z\}$  is applied to the data qubit with probability $p/3$.

	\item After each two-qubit CNOT gate, a two-qubit Pauli error is uniformly and randomly drawn from the set $\{I, X, Y, Z\}^{\otimes2}\setminus\{I\otimes I\}$ with probability  $p/15$.
	
	\item Performing a two-outcome measurement 
	produces an incorrect  outcome with probability $p$. 
\end{enumerate}

A round of SE for all $X$-type and $Z$-type stabilizers can be performed in parallel using the circuits shown in Fig.~\ref{fig2}.
However, due to imperfections in the involved gates and measurements, error syndromes may be incorrect.
Decoding based on such faulty syndromes can lead to a high probability of logical errors.
For a surface code of distance $L$, it generally requires decoding over $T$ rounds of SE to ensure fault tolerance, where $T = O(L)~$\cite{dennis2002topological}.
In this paper, we set $T = L$.

\subsubsection{ Decoding lattice and the MWPM decoder}

In the circuit-level noise model, the decoding lattice for surface codes is a 3D space-time graph that accounts for possible faults across multiple rounds of syndrome extraction~\cite{wang2011surface,fowler2012topological}. 
Two separate decoding lattices are used for decoding $X$-type and $Z$-type errors, respectively. 
Each stabilizer measurement at different time steps is represented by a check node. Within a single round, two check nodes are connected if they involve the same qubits. 
For example, measurement errors induce vertical edges connecting the same stabilizer across two consecutive rounds. Additionally, CNOT-induced faults create characteristic edge patterns between check nodes across time steps, capturing the correlated nature of propagated errors.
Figure~\ref{fig3} illustrates the decoding lattice for $Z$-type stabilizers with $L=3$.

\begin{figure}[htbp] 
	\centering 
	\includegraphics[width=0.7\columnwidth]{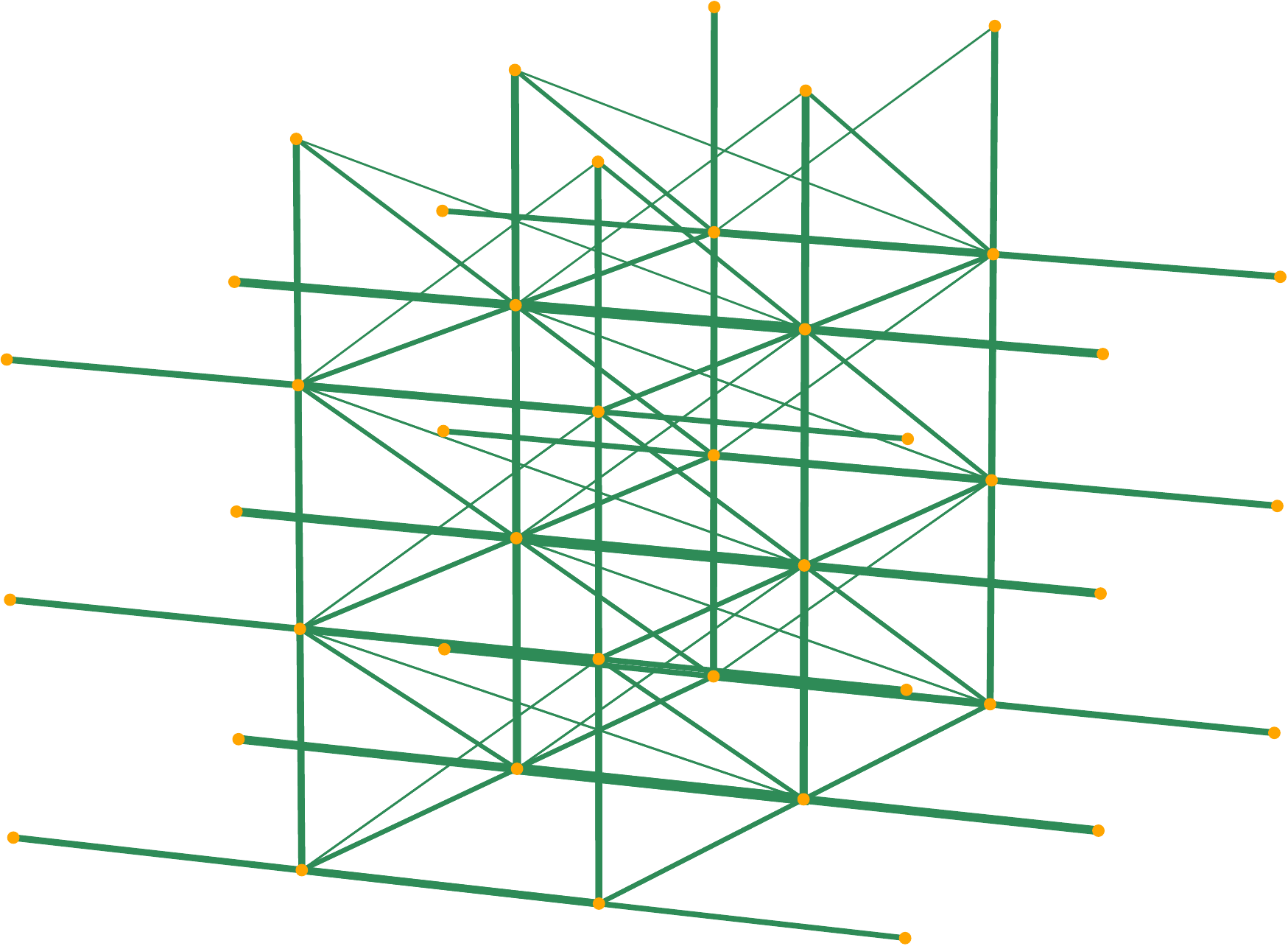}
	\caption{Standard circuit-level depolarization noise decoding lattice for $Z$-stabilizers with distance 3. The edge thickness in the lattice is proportional to $-\ln P$, where $P$ is the overall probability of faults contributing to that edge.}
	\label{fig3}
\end{figure}

A \textit{detection event} occurs when the measured syndrome of a stabilizer differs from that in the previous round, indicating a nontrivial syndrome at the corresponding check node.
By analyzing detection events between connected check nodes, each connecting edge is assigned a weight proportional to $-\ln P$, where the error probability $P$ (as a function of the physical error rate $p$) captures the cumulative effect of noisy stabilizer measurements and error propagation~\cite{wang2011surface, fowler2012topological}.
In general, using a more accurate decoding lattice, which may incorporate additional circuit-level information, can improve decoding performance~\cite{wang2011surface, Fow13}.



$X$-type and $Z$-type errors are handled separately by the MWPM decoder. Given a 3D space-time decoding lattice and the measured syndromes for either $X$-type or $Z$-type errors, a standard MWPM decoder identifies the minimum weight perfect matching in the lattice. This matching corresponds to the most likely error pattern that explains the observed syndromes by connecting them in the decoding lattice.

\subsection{IRMWPM}
It has been shown that iteratively applying MWPM decoding for $X$- and $Z$-type errors with reweighting substantially improves decoding performance~\cite{Fow13,DT14,yuan2022modified, iOMFC23}.
In this section, we describe the IRMWPM decoder with reweighting strategies.
The following subsections detail the reweighting strategies employed in our iterative decoder.

 \begin{algorithm}[htbp]
\caption{Iterative Minimum Weight Perfect Matching (IRMWPM) Decoder}
\label{alg:iterative_mwpm}

\KwIn{ $X$ syndrome $S_X$, $Z$ syndrome $S_Z$, maximum iterations $T_{\max}$ }
\KwOut{X error estimate $\hat{E}_X$, Z error estimate $\hat{E}_Z$}


\tcp{Initialization}
$\hat{E}_X^{(0)} \leftarrow \emptyset$\;
$\hat{E}_Z^{(0)} \leftarrow \emptyset$\;
$k \leftarrow 0$ \hfill\tcp{Iteration counter}
 Construct $X$ decoding graph $\cL_X$ from syndrome $S_X$ \\
Construct $Z$ decoding graph $\cL_Z$ from syndrome $S_Z$ 
\BlankLine

\While{$k < T_{\max}$ }{
    \tcp{X Decoding Step}
   
    \text{Reweight}($\cL_X$,$\hat{E}_Z^{(k)}$)\;
    $\hat{E}_X^{(k+1)} \leftarrow \text{MWPM}(\cL_X)$\;
    
    \BlankLine
    
    \tcp{Z Decoding Step}

    \text{Reweight}($\cL_Z$,$\hat{E}_X^{(k+1)}$)\;
    $\hat{E}_Z^{(k+1)} \leftarrow \text{MWPM}(\cL_Z)$\;
    
    \BlankLine
    
    \tcp{Stopping Criteria}
    \uIf{ $\exists{j,l} \leq k,  \hat{E}_X^{(k+1)} = \hat{E}_X^{(j)} || \hat{E}_Z^{(k+1)} = \hat{E}_Z^{(l)}$,}
    {
        \textbf{break}\hfill \tcp{Error estimates converged}
    }
    
    \BlankLine
    
    $k \leftarrow k + 1$ \;
}

\BlankLine

\Return{$\hat{E}_X^{(k+1)}, \hat{E}_Z^{(k+1)}$}

\end{algorithm}

\subsubsection{Code capacity noise model}
We start with the simpler code capacity noise model, where syndrome extraction circuits are perfect and the decoding lattice constitutes a single horozontal layer,
and then extend the analysis to the circuit-level noise model.

A single-qubit Pauli error can be expressed as $X^a Z^b$ for $a, b \in \{0,1\}$, up to  a global phase.
The joint probability distribution of $X^a$ and $Z^b$ under the depolarizing noise model is given in Table~\ref{tab:joint_probability_single}. In this representation, the two types of errors are not independent. 
In fact, the conditional probabilities $\Pr(X|Z) = \Pr(Z|X) = 1/2$ indicate that the occurrence of one type of error significantly increases the likelihood of the other, raising it from $p/3$ to $1/2$.
However, standard MWPM decoding neglects the correlations between $X$- and $Z$-type errors, making it suboptimal.

\begin{table}[htbp]
	\centering
	\begin{tabular} {|c|p{1cm}|p{1cm}|}
		\hline
		\diagbox{$a$}{$b$} & $0$ & $1$ \\
		\hline
		$0$ & $1-p$ & $p/3$ \\
		$1$ & $p/3$ & $p/3$ \\ 
		\hline
	\end{tabular}
	\caption{Joint probability distribution of $X^a$ and $Z^b$ for single-qubit gates.}
	\label{tab:joint_probability_single}
\end{table}

By incorporating these correlations, the weights of the $Z$-error decoding lattice can be updated based on the decoding results for $X$ errors, and vice versa. This iterative process continues for both $X$ and $Z$ errors until the decoding results converge or a predefined iteration limit is reached.
More precisely, the weight of an edge is determined by the associated conditional probability. When the conditional probability $\Pr(X|Z) = \Pr(Z|X) = 1/2$ is considered, the edge weight is effectively reduced, making it a more favorable matching option.

\begin{figure}[htp]
	\centering
	\subfigure[syndrome]{\includegraphics[width=0.4\columnwidth]{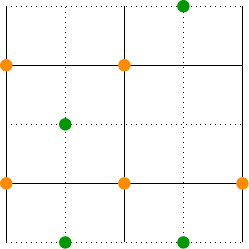}}
	\hfill
	\subfigure[MWPM correction]{\includegraphics[width=0.42\columnwidth]{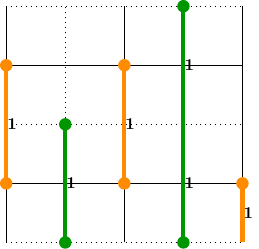}}\\[1ex]
	\subfigure[X matching given Z correction]{\includegraphics[width=0.42\columnwidth]{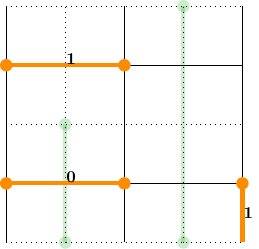}}
	\hfill
	\subfigure[Z matching given X correction]{\includegraphics[width=0.42\columnwidth]{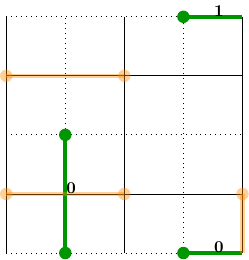}}
	\caption{An example of the iterative weighted strategy on  2D decoding lattices, where green and orange dots represent nontrivial syndromes, and colored edges indicate perfect matchings. }
	\label{fig4}
\end{figure}

For simplicity, the weight of an unconditioned edge is normalized to 1, whereas an edge conditioned on the presence of an error in the dual lattice is reweighted to 0.

Figure~\ref{fig4} provides a simple example illustrating the decoding process of this strategy in a 2D grid when SE is noiseless.
First, the syndromes for $X$- and $Z$-type errors are extracted (Fig.~\ref{fig4}(a)) and decoded separately to obtain their respective initial decoding results (Fig.~\ref{fig4}(b)). The decoding results for $Z$ errors are then used to update the decoding lattice for $X$ errors, leading to refined decoding results for $X$ errors (Fig.\ref{fig4}(c)). Next, these updated $X$-error decoding results are used to adjust the decoding lattice for $Z$ errors, yielding improved decoding results for $Z$ errors (Fig.~\ref{fig4}(d)).

\subsubsection{Circuit-level noise model}

Next, we extend this iterative reweighting decoding procedure to the circuit-level noise model in FTQC, accounting for noise correlations   from error propagation through noisy CNOT gates, which has been studied in~\cite{Fow13,PF23}.
To support reweighting, we propose a systematic method for constructing the conditional probabilities of these correlated events.

Decoding circuit-level noise introduces additional complexities. In the case of noiseless SEs, each edge in the decoding lattice corresponds to a single location fault, making error identification straightforward. However, for circuit-level noise in 3D decoding lattices, a single edge often results from multiple faults across different gates and qubits.

A two-qubit Pauli error generated from two-qubit CNOT gates can be represented as
 $X^{a_1a_2}Z^{b_1b_2}$  for $a_1, a_2,b_1,b_2\in\{0,1\}$, up to a global phase.
The joint probability distribution of $X^{a_1a_2}$ and $Z^{b_1b_2}$ errors for two-qubit gate faults is given in Table~\ref{tab:joint_probability_two}.
Handling $X$- and $Z$-type errors separately can be achieved by considering the distributions of $a_1a_2$ and $b_1b_2$ independently.
For instance, 
\begin{align*}
	\Pr(X^{a_1a_2})=&\sum_{b_1',b_2'=0}^1\Pr(X^{a_1a_2}Z^{b_1'b_2'}  )= \frac{4p}{15},\\
	\Pr(Z^{b_1b_2})=& \sum_{a_1',a_2'=0}^1\Pr(X^{a_1'a_2'}Z^{b_1b_2}  )= \frac{4p}{15}
\end{align*}for $a_1a_2 \neq 00$
and $b_1b_2 \neq 00$.

\begin{table}[htbp]
	\centering
	\begin{tabular}{|c | @{\hspace{2mm}}p{1cm} @{\hspace{2mm}}p{1cm} @{\hspace{2mm}}p{1cm} @{\hspace{2mm}}p{1cm}|}
		\hline
		\diagbox{$a_1a_2$}{$b_1b_2$} & $00$ & $01$ & $10$ & $11$ \\ 
		\hline
		$00$ &  $1-p$ & $p/15$ & $p/15$ & $p/15$\\ 
		$01$ &  $p/15$ & $p/15$ & $p/15$ & $p/15$ \\ 
		$10$ &  $p/15$ & $p/15$ & $p/15$ & $p/15$ \\ 
		$11$ &  $p/15$ & $p/15$ & $p/15$ & $p/15$ \\ [1ex]
		\hline
	\end{tabular}
	\caption{Joint probability distribution  of $X^{a_1a_2}$ and $Z^{b_1b_2}$ errors for CNOT gates.}
	\label{tab:joint_probability_two}
\end{table}

In~\cite{wang2011surface}, single fault mechanisms on surface code decoding lattices, arising from CNOT faults, data qubit faults, and measurement faults, are carefully classified, identifying 23 basic error matchings for each $X$ or $Z$ decoding lattice.
Each error matching is represented by a thick edge connecting two detection events and is categorized into one of six types on the $X$ decoding lattice. These six types, labeled a through f, are illustrated as orange strings in Fig.~\ref{fig5}.
For example, the error matching  a in Fig.~\ref{fig5}~(a)  represents a temporal edge connecting two check nodes corresponding to the same stabilizer across consecutive layers, which  may arise from a measurement fault or a CNOT fault that leaves a residual Pauli error on the measurement qubit.

Also shown in Fig.~\ref{fig5}, each type of orange error matching has corresponding green error matches on the dual lattice that may occur simultaneously  due to the same fault source. 
 For instance, in Fig.~\ref{fig5}~(c), the matchings $\textrm{a}_1$ and $\textrm{a}_2$ correspond to the same type as matching a in Fig.~\ref{fig5}~(a). Every green matching in the dual lattice is correlated with the orange matching in the same subfigure.
For example, the matchings $\textrm{b}$ and $\textrm{d}_1$ in Fig.~\ref{fig5}~(b)
both originate from a single $Y$ error on the data qubit
located at the intersection of the two matchings.
These  six groups of subfigures in Fig.~\ref{fig5}
are adapted from the  classifications in~\cite{wang2011surface}
by pairing matchings that arise from the same fault source.
For example, in the $X$-stabilizer SE in Fig.~\ref{fig2}, a two-qubit error  $X^{01}Z^{10}$ on the control and target qubits after the first CNOT gate
 results in matchings a and d1 in Fig.~\ref{fig5}~(a).
 A similar correlation structure applies to the $Z$ decoding lattice.


In the preceding discussion, each type of error matching in a decoding lattice is associated with a specific occurrence probability. We now examine how the presence of a particular error matching on one decoding lattice may influence its counterpart on the dual lattice.
This reweighting process is guided by conditional probabilities between such correlated edges, effectively transforming a static decoding graph into a dynamic graph that captures the fault correlations imposed by the physical noise model.

To simplify the calculation, we adopt a direct summation approach to estimate the occurrence probability of each type of error matching, under the assumption of a low physical error rate $p$. For instance, the orange error matching a in Fig.~\ref{fig5}~(a) can result from four distinct CNOT gate faults and one measurement fault, corresponding to five out of the 23 error matchings identified in\cite{wang2011surface}. The approximate occurrence probability is given by
\begin{align} 
	P(\textrm{a}) = 4 \cdot \frac{4p}{15} + p = \frac{31p}{15}. \label{eq:Pa} 
\end{align} 
The probabilities for the remaining types of matchings can be computed similarly.

From Fig.~\ref{fig5}, it is evident that matchings on a decoding lattice and its dual can be correlated, and the probability of one matching occurring, conditioned on a correlated matching in the dual lattice, can be computed.
Consider again the orange error matching a in Fig.~\ref{fig5}(a) along with the green error matching d1.
According to~\cite[Fig.~3(a) and~3(b)]{wang2011surface}, these syndromes may arise from errors $X^{10}Z^{10}$ or $X^{01}Z^{10}$ (error matchings 1 and 4), or $X^{11}Z^{01}$ (error matchings 7 and 10).
Thus, the joint probability of occurrence is $P(\textrm{a}, \textrm{d}_1) = 3p/15$.
From Eq.~(\ref{eq:Pa}), the conditional probability is $P(\textrm{d}_1|\textrm{a}) = P(\textrm{a}, \textrm{d}_1)/P(\textrm{a}) = 3/31$, which is significantly higher than the standalone probability of $\textrm{d}_1$, given by $P(\textrm{d}_1) = 42p/15$.
Therefore, we reassign the weight of $\textrm{d}_1$ in the dual lattice as $-\ln(P(\textrm{d}_1|a))$.

Table~\ref{tb:reweighting} lists the conditional probabilities used to reweight edges that are not near the boundaries of the decoding lattices.
For error matchings near the boundary, the corresponding reweighted values may differ due to the reduced number of contributing fault mechanisms.


\begin{table}[htbp]
	\centering
	\renewcommand{\arraystretch}{1.5} 
	\[
	\begin{array}{|c|c|c|c|}
		\hline
		P(\textrm{c}_1|\textrm{a}) = \frac{1}{31} & P(\textrm{d}_1|\textrm{c}) = \frac{3}{16} & P(\textrm{d}_1|\textrm{d}) = \frac{1}{21} & P(\textrm{d}_2|\textrm{e}) = \frac{1}{8} \\
		\hline
		P(\textrm{c}_2|\textrm{a}) = \frac{1}{31} & P(\textrm{d}_2|\textrm{c}) = \frac{3}{16} & P(\textrm{d}_2|\textrm{d}) = \frac{1}{21} & P(\textrm{f}_1|\textrm{e}) = \frac{1}{8} \\
		\hline
		P(\textrm{d}_1|\textrm{a}) = \frac{3}{31} & P(\textrm{e}_1|\textrm{c}) = \frac{1}{8}   & P(\textrm{e}_1|\textrm{d}) = \frac{1}{42} & P(\textrm{f}_2|\textrm{e}) = \frac{1}{8} \\
		\hline
		P(\textrm{d}_2|\textrm{a}) = \frac{3}{31} & P(\textrm{a}_0|\textrm{d}) = \frac{1}{14}  & P(\textrm{e}_2|\textrm{d}) = \frac{1}{42} & P(\textrm{a}_0|\textrm{f}) = \frac{1}{4} \\
		\hline
		P(\textrm{f}_1|\textrm{a}) = \frac{2}{31} & P(\textrm{a}_1|\textrm{d}) = \frac{1}{14}  & P(\textrm{f}_1|\textrm{d}) = \frac{1}{42} & P(\textrm{d}_1|\textrm{f}) = \frac{1}{8} \\
		\hline
		P(\textrm{d}_1|\textrm{b}) = \frac{1}{2}  & P(\textrm{b}_0|\textrm{d}) = \frac{3}{14}  & P(\textrm{f}_2|\textrm{d}) = \frac{1}{42} & P(\textrm{d}_2|\textrm{f}) = \frac{1}{8} \\
		\hline
		P(\textrm{a}_0|\textrm{c}) = \frac{1}{16} & P(\textrm{c}_1|\textrm{d}) = \frac{1}{14}  & P(\textrm{c}_1|\textrm{e}) = \frac{1}{4}  & P(\textrm{e}_1|\textrm{f}) = \frac{1}{8} \\
		\hline
		P(\textrm{a}_1|\textrm{c}) = \frac{1}{16} & P(\textrm{c}_2|\textrm{d}) = \frac{1}{14}  & P(\textrm{d}_1|\textrm{e}) = \frac{1}{8}  & P(\textrm{e}_2|\textrm{f}) = \frac{1}{8} \\
		\hline
	\end{array}
	\]
	\caption{Conditional probabilities used to reweight correlated error matchings across dual decoding lattices.} \label{tb:reweighting}
\end{table}

Note that this correlation between the two types of errors is quite general and not limited to depolarizing circuit-level noise in practice. For example, non-probabilistic noise channels such as amplitude damping ($T_1$ process) also exhibit this correlation from a quantum error correction perspective. In this paper, we focus on circuit-level noise for simplicity. Incorporating these correlations into decoding can also be extended to more realistic noise models in practical scenarios.

\subsubsection{Theoretical guarantees of IRMWPM}
Yuan et al.~\cite{yuan2022modified} showed that under the code-capacity noise model, the  weight of Pauli correction does not increase across IRMWPM iterations. We extend this result to the circuit-level noise model, enabling a proof of convergence for IRMWPM.

\begin{lemma}\label{lemma:nonincreasing}
 In IRMWPM decoding under the circuit-level noise model, the weight of the joint Pauli correction is nonincreasing across iterations. Specifically, let $\hat{E}^{(j)}$ denote the joint Pauli $X$ and $Z$ correction obtained after the $j$-th iteration. Then, for $j\geq 0$,
	\[
	\wt{\hat{E}^{(j)}}\geq \wt{\hat{E}^{(j+0.5)}}\geq \wt{\hat{E}^{(j+1)}}.
	\]
\end{lemma}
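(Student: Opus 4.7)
By the symmetric alternating structure of the algorithm, it suffices to prove $\wt{\hat{E}^{(j)}} \geq \wt{\hat{E}^{(j+0.5)}}$; the inequality $\wt{\hat{E}^{(j+0.5)}} \geq \wt{\hat{E}^{(j+1)}}$ then follows by swapping the roles of $X$ and $Z$. In this half-step, $\hat{E}_X^{(j)}$ is held fixed while $\hat{E}_Z^{(j)}$ is replaced by $\hat{E}_Z^{(j+1)}$, the MWPM output on $\cL_Z$ after reweighting with respect to $\cM_X^{(j)}$. My first move is to rewrite the joint Hamming weight as
\[
\wt{\hat{E}_X^{(j)}\cdot \hat{E}_Z} \;=\; \wt{\hat{E}_X^{(j)}} + \bigl|\mathrm{supp}(\hat{E}_Z)\setminus \mathrm{supp}(\hat{E}_X^{(j)})\bigr|,
\]
so that the goal reduces to showing that $\hat{E}_Z^{(j+1)}$ introduces no more data qubits outside $\mathrm{supp}(\hat{E}_X^{(j)})$ than $\hat{E}_Z^{(j)}$ does.

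The core of the argument is an invariant relating reweighted matching weight to this incremental quantity. For each correlated primal/dual edge pair (the six types (a)--(f) in Figure~\ref{fig5}), the conditional probabilities in Table~\ref{tb:reweighting} are calibrated so that the reweighted cost of a dual edge drops precisely when its fault can be attributed to a fault already accounted for in $\cM_X^{(j)}$; the residual cost then quantifies only the genuinely new data qubits that the dual edge would introduce. I would formalize this as an affine identity relating $\sum_{e\in \cM_Z} w^{(j+1)}(e)$ and $\bigl|\mathrm{supp}(\hat{E}_Z(\cM_Z))\setminus \mathrm{supp}(\hat{E}_X^{(j)})\bigr|$, established by a case analysis of the fault mechanisms enumerated in Wang et al.~\cite{wang2011surface}. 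Given this identity, MWPM optimality on $\cL_Z^{(j+1)}$---applied with $\hat{E}_Z^{(j)}$ itself as a candidate matching---immediately yields $\wt{\hat{E}^{(j+0.5)}} \leq \wt{\hat{E}^{(j)}}$.

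The main obstacle is establishing the invariant under circuit-level noise. In the code-capacity setting of Yuan et al.~\cite{yuan2022modified}, this is almost automatic because each edge encodes a single-qubit fault and the normalized reweighting collapses to $0$-or-$1$ weights. In the circuit-level setting, a single lattice edge may originate from several distinct fault locations---data-qubit Paulis, two-qubit CNOT errors with nontrivial $X$ and $Z$ components, or measurement faults compounded with gate faults---and a single physical fault can simultaneously trigger correlated detection events on both the primal and dual lattices. The audit therefore must (i) enumerate, for each type (a)--(f), every fault mechanism contributing to the corresponding edge, (ii) verify that the conditional probabilities of Table~\ref{tb:reweighting} realize a ``shared-fault discount'' consistent with the affine identity, and (iii) handle boundary edges separately since they involve fewer contributing mechanisms. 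Once these cases are settled, the proof closes symmetrically for the $X$ update that produces $\hat{E}^{(j+1)}$ from $\hat{E}^{(j+0.5)}$, completing the lemma.
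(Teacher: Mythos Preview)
Your approach and the paper's coincide in structure: both express the joint weight as (base weight of one matching) plus (reweighted weight of the other), invoke MWPM optimality on the reweighted lattice, and alternate by symmetry. The paper packages this decomposition as the identities~(\ref{eq:w1})--(\ref{eq:w4}) and chains the inequalities directly; you correctly flag that justifying the decomposition is where the substance lies.

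There is, however, a gap in your execution plan. You propose an affine identity linking the reweighted matching weight---computed from the $-\ln P$ values of Table~\ref{tb:reweighting}---to the integer count $\bigl|\mathrm{supp}(\hat{E}_Z)\setminus\mathrm{supp}(\hat{E}_X^{(j)})\bigr|$. These quantities live in incommensurate units, so no such identity can hold under the paper's conditional-probability reweighting, and the fault-by-fault audit you outline cannot manufacture one. What the paper's chain actually requires (its step~(b)) is only the \emph{swap identity}
\[
\wt{\cL_X^{(0)},\cM_X}+\wt{\cL_Z\text{ reweighted by }\cM_X,\;\cM_Z}
=
\wt{\cL_Z^{(0)},\cM_Z}+\wt{\cL_X\text{ reweighted by }\cM_Z,\;\cM_X},
\]
which is equivalent to the reweighting discount being symmetric in $(\cM_X,\cM_Z)$. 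For conditional-probability reweighting this symmetry is automatic: the discount contributed by a correlated primal/dual edge pair $(e,f)$ is $\ln\bigl(P(e,f)/P(e)P(f)\bigr)$, which is symmetric in $e$ and $f$---so no enumeration of the (a)--(f) fault types is needed. This establishes that the matching-weight functional is nonincreasing along the iteration; whether that functional literally equals the Pauli weight $\wt{\hat{E}^{(j)}}$ is a separate claim that the paper, like your proposal, does not fully justify.
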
 
\begin{proof}
 
	Let $\wt{\cL_X^{(j)}, \cM_X^{(k)}}$ denote the weight of the matching $\cM_X^{(k)}$ on the lattice $\cL_X^{(j)}$. 
	These quantities can be expressed as: 
	\begin{align}
		W_{j}=& \wt{\cL_Z^{(0)}, \cM_Z^{(j)}}+\wt{\cL_X^{(j+1)}, \cM_X^{(j)}}\label{eq:w1}  \\
		=& \wt{\cL_X^{(0)}, \cM_X^{(j)}}+ \wt{\cL_Z^{(j)}, \cM_Z^{(j)}}\label{eq:w2}\\
		W_{j+0.5}=& \wt{\cL_Z^{(0)}, \cM_Z^{(j+1)}}+\wt{\cL_X^{(j+1)}, \cM_X^{(j)}}\label{eq:w3}\\
		=& \wt{\cL_X^{(0)}, \cM_X^{(j)}}+ \wt{\cL_Z^{(j+1)}, \cM_Z^{(j+1)}} \label{eq:w4}.
	\end{align}
	Equations~\eqref{eq:w1} and~\eqref{eq:w3} compute the total weight by first evaluating the Pauli $Z$ component, followed by the reweighted Pauli $X$ component. Conversely, Equations~\eqref{eq:w2} and~\eqref{eq:w4} begin with the Pauli $X$ part and then account for the corresponding $Z$ part after reweighting.

We now derive the following sequence of inequalities:
\begin{align*}
	W_{0}=& \wt{\cL_X^{(0)},\cM_X^{(0)}} + \wt{\cL_Z^{(1)},\cM_Z^{(0)}}\\
	\stackrel{(a)}{\geq}&\wt{\cL_X^{(0)},\cM_X^{(0)}} + \wt{\cL_Z^{(1)},\cM_Z^{(1)}}= W_{0.5}\\
		\stackrel{(b)}{=}& \wt{\cL_Z^{(0)},\cM_Z^{(1)}} +\wt{\cL_X^{(1)},\cM_X^{(0)}}  \\
	\stackrel{(c)}{\geq }& \wt{\cL_Z^{(0)},\cM_Z^{(1)}} +\wt{\cL_X^{(1)},\cM_X^{(1)}}=W_{1},
\end{align*}
where  inequality $(a)$  follows from the fact that $\cM_Z^{(1)}$ is a matching of minimum weight on $\cL_Z^{(1)}$,  so $\wt{\cL_Z^{(1)},\cM_Z^{(0)}}\geq  \wt{\cL_Z^{(1)},\cM_Z^{(1)}}$; equality $(b)$ is due to Eq.~(\ref{eq:w3}), which relates the lattices after reweighting; inequality $(c)$ holds because $\mathcal{M}_X^{(1)}$ is a minimum-weight matching on $\mathcal{L}_X^{(1)}$.

By repeating this argument across further iterations, we obtain the monotonic sequence:
\[
W_0\geq W_{0.5}\geq W_{1}\geq W_{1.5}\geq \cdots,
\]
which establishes the claim.
\end{proof}

 We have shown that the sequence ${\wt{\hat{E}^{(j)}}}$ is nonincreasing and bounded below by zero. Therefore, it must stabilize in finite time, establishing the convergence of IRMWPM.

Now we establish the distance guarantee of the IRMWPM decoder.  The standard MWPM decoder provides the following distance guarantee.

\begin{lemma}[MWPM decoder]\label{lemma:MWPM} 
    Let $E = E_X E_Z$ be an $n$-qubit Pauli error on a surface code of distance $d$. 
If $\wt{E_X} \leq \left\lfloor \frac{d-1}{2} \right\rfloor$ and $\wt{E_Z} \leq \left\lfloor \frac{d-1}{2} \right\rfloor$, 
then $E$ is correctable by the MWPM decoder  assuming perfect syndrome measurements.
    \end{lemma}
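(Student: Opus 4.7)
The plan is to decouple the problem into two independent matching subproblems and then apply a standard minimum-weight / triangle-inequality argument to each. Under perfect syndrome measurements, the MWPM decoder for the surface code operates separately on a 2D $X$-decoding lattice $\cL_X$ and a 2D $Z$-decoding lattice $\cL_Z$, as described in Section~\ref{sec:preliminaries}. The $X$-component $E_X$ anti-commutes only with $Z$-type stabilizers and therefore produces a nontrivial syndrome only on $\cL_X$; symmetrically $E_Z$ affects only $\cL_Z$. It thus suffices to prove the following $X$-only claim: if $\wt{E_X}\leq \left\lfloor\frac{d-1}{2}\right\rfloor$, then the MWPM correction $\hat{E}_X$ computed on $\cL_X$ satisfies $\hat{E}_X E_X\in\cS$; the $Z$-only claim is symmetric.

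For the $X$-only claim, I would first fix the graph-theoretic correspondence: each single-qubit $X$ error on a data qubit is identified with a unit-weight edge of $\cL_X$, with boundary qubits yielding half-edges to a virtual boundary vertex. Consequently any $X$-type Pauli error corresponds to an edge set whose odd-degree vertices coincide with the observed syndrome vertices, so $E_X$ itself is a feasible matching for that syndrome. By the minimum-weight property of MWPM,
\begin{equation*}
\wt{\hat{E}_X}\;\leq\;\wt{E_X}\;\leq\;\left\lfloor\tfrac{d-1}{2}\right\rfloor.
\end{equation*}
The triangle inequality for Pauli weights then gives
\begin{equation*}
\wt{\hat{E}_X E_X}\;\leq\;\wt{\hat{E}_X}+\wt{E_X}\;\leq\;2\left\lfloor\tfrac{d-1}{2}\right\rfloor\;\leq\;d-1.
\end{equation*}
Because $\hat{E}_X$ and $E_X$ produce identical syndromes, their product commutes with every stabilizer generator and thus lies in the normalizer of $\cS$, i.e., it is either an element of $\cS$ or a nontrivial logical operator. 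Since every nontrivial logical operator on a distance-$d$ surface code has weight at least $d$, the bound above rules out the logical case and forces $\hat{E}_X E_X\in\cS$, which is exactly the condition that MWPM corrects $E_X$ successfully. The symmetric argument disposes of $E_Z$.

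The proof is essentially a textbook combinatorial argument once the edge-error correspondence is in place, so there is no genuinely hard step. The main point requiring care is the boundary treatment of the planar surface code: because some syndromes are produced by error chains terminating at a boundary rather than at a pair of defects, one must augment $\cL_X$ and $\cL_Z$ with virtual boundary vertices and allow zero-cost matching into them, and then verify that the edge set induced by $E_X$ remains a feasible matching in this augmented graph. With this setup fixed, the weight comparison, triangle inequality, and distance-based elimination of the logical case all go through mechanically, giving the stated distance guarantee.
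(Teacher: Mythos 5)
Your argument is correct, and it is the standard one. Note that the paper does not actually prove Lemma~\ref{lemma:MWPM}: it is stated as the known distance guarantee of MWPM (with a remark that the circuit-level extension lacks a formal proof), so there is no in-paper proof to compare against. Your reasoning, however, mirrors exactly the mechanism the paper does use when proving Theorem~\ref{thm:itMWPM}: the minimum-weight property gives $\wt{\hat{E}}\leq\wt{E}\leq\lfloor(d-1)/2\rfloor$, the two operators share a syndrome, and the weight bound $\wt{\hat{E}E}\leq 2\lfloor(d-1)/2\rfloor<d$ forces the product to be a stabilizer rather than a logical operator. Two small points of care in your write-up: first, the phrase ``$E_X$ itself is a feasible matching'' is slightly loose --- strictly, $E_X$ induces, via a path-and-cycle decomposition of its edge set, a perfect pairing of the defects (with boundary terminations) whose matching weight, measured by shortest-path distances, is at most $\wt{E_X}$; the MWPM correction is then laid along shortest paths, which is what yields $\wt{\hat{E}_X}\leq\wt{E_X}$. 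Second, the guarantee implicitly assumes uniform edge weights (equivalently, Hamming-weight-minimizing matching), which is consistent with the paper's normalization of unconditioned edge weights to $1$; with heterogeneous $-\ln P$ weights the minimum-weight matching need not minimize Hamming weight, so the hypothesis of the lemma should be read in that uniform-weight setting. Neither point is a genuine gap.
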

 
This property is expected to extend to circuit-level noise when more than $d$ rounds of noisy syndrome measurements are used with one additional round of perfect syndrome measurements, although no formal proof currently exists.

The UF decoder offers a similar distance guarantee~\cite{delfosse2021almost};
however, it has been shown that an iterative UF decoder does not maintain this guarantee on surface codes~\cite{LL25}.
This highlights the importance of verifying whether the IRMWPM decoder preserves the same distance guarantee,
which we establish in the following.

\begin{theorem}[Decoding Radius of IRMWPM]\label{thm:itMWPM}
Let $E$ be an $n$-qubit Pauli error with $\wt{E} \leq \left\lfloor \frac{d-1}{2} \right\rfloor$. 
	If $E$ is correctable by the MWPM decoder, then it is also correctable by the IRMWPM decoder on a surface code of distance $d$,
  assuming that the final round of syndrome measurements is perfect.
\end{theorem}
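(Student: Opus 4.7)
The plan is to reduce the statement to Lemma~\ref{lemma:MWPM}, the distance guarantee of standard MWPM, by showing that every IRMWPM iterate $\hat{E}^{(j)}$ satisfies the componentwise Pauli-weight bounds $\wt{\hat{E}_X^{(j)}} \leq \lfloor (d-1)/2 \rfloor$ and $\wt{\hat{E}_Z^{(j)}} \leq \lfloor (d-1)/2 \rfloor$ measured on the original (unreweighted) lattice. Once both bounds hold, each of the products $\hat{E}_X^{(j)} E_X$ and $\hat{E}_Z^{(j)} E_Z$ has zero syndrome and Pauli weight at most $2\lfloor (d-1)/2\rfloor \leq d-1 < d$, which forces it into the corresponding stabilizer subgroup; consequently $\hat{E}^{(j)} E \in \cS$ for every $j$, so the IRMWPM correction is valid throughout the iteration.

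The crucial step is to promote the joint non-increasing statement of Lemma~\ref{lemma:nonincreasing} to a componentwise one. Comparing the two expressions already used in its proof, Equations~\eqref{eq:w1} and~\eqref{eq:w3} share the term $\wt{\cL_X^{(j+1)}, \cM_X^{(j)}}$, so
\begin{equation*}
W_{j+0.5} - W_j \;=\; \wt{\cL_Z^{(0)}, \cM_Z^{(j+1)}} - \wt{\cL_Z^{(0)}, \cM_Z^{(j)}} \;\leq\; 0.
\end{equation*}
Because $\cL_Z^{(0)}$ is the unreweighted lattice, this yields $\wt{\hat{E}_Z^{(j+1)}} \leq \wt{\hat{E}_Z^{(j)}}$. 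Applying Equation~\eqref{eq:w2} at index $j+1$ against Equation~\eqref{eq:w4} at index $j$, which share the term $\wt{\cL_Z^{(j+1)}, \cM_Z^{(j+1)}}$, gives the symmetric bound $\wt{\hat{E}_X^{(j+1)}} \leq \wt{\hat{E}_X^{(j)}}$. For the base case, $\cM_X^{(0)}$ and $\cM_Z^{(0)}$ are minimum-weight matchings on the unreweighted lattices, and $E_X$, $E_Z$ themselves are valid matchings for the respective syndromes, so $\wt{\hat{E}_X^{(0)}} \leq \wt{E_X} \leq \lfloor (d-1)/2 \rfloor$ and analogously for $Z$. Iterating the monotonicity delivers the componentwise bound for every $j$, closing the argument in the code-capacity setting.

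The main obstacle is the circuit-level model, where a single edge of the 3D space-time lattice aggregates several fault mechanisms (data-qubit faults, CNOT faults, and measurement faults) and the matching weight is no longer identical to the qubit-level Pauli weight of the correction. To handle this, I would invoke the perfect-final-round assumption to align the terminal time slice of the lattice with an effective data-qubit error $E$, appeal to the circuit-level extension of Lemma~\ref{lemma:MWPM} referenced in the paper, and then reuse the componentwise monotonicity above. A key consistency check is that the reweighting prescription in Table~\ref{tb:reweighting} makes $\cL_Z^{(j+1)}$ depend only on $\cM_X^{(j)}$ and $\cL_X^{(j+1)}$ depend only on $\cM_Z^{(j)}$; this locality is precisely what makes the shared-term cancellations in the componentwise derivation valid and therefore what transports the code-capacity argument into the circuit-level setting.
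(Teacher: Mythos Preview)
Your proposal is correct but takes a genuinely different route from the paper. The paper invokes Lemma~\ref{lemma:nonincreasing} as a black box on the \emph{joint} Pauli weight: it argues that the initial MWPM estimate $\hat{E}$ and any subsequent IRMWPM iterate $\hat{E}'$ both have joint weight at most $\lfloor(d-1)/2\rfloor$, so the product $\hat{E}\hat{E}'$ has trivial syndrome and weight strictly below $d$, hence is a stabilizer; the IRMWPM output is therefore stabilizer-equivalent to the MWPM output, which corrects $E$ by hypothesis. You instead extract a \emph{componentwise} monotonicity statement from the shared-term cancellations in Equations~\eqref{eq:w1}--\eqref{eq:w4}, bound $\wt{\hat{E}_X^{(j)}}$ and $\wt{\hat{E}_Z^{(j)}}$ separately on the unreweighted lattices, and compare each IRMWPM iterate directly with the true error $E$ rather than with the MWPM output. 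Your route buys a cleaner base case: $\wt{\hat{E}_X^{(0)}}\leq\wt{E_X}$ follows immediately from separate MWPM optimality, whereas the paper's assertion $\wt{\hat{E}^{(0)}}\leq\wt{E}$ for the joint Pauli weight is more delicate, since MWPM minimizes the $X$ and $Z$ matchings independently and the resulting supports need not overlap the way those of $E_X$ and $E_Z$ do. The paper's route buys brevity: once Lemma~\ref{lemma:nonincreasing} is in hand, a single triangle-inequality step finishes the argument, and there is no need to revisit the internal structure of Equations~\eqref{eq:w1}--\eqref{eq:w4}. Both approaches defer the circuit-level subtlety (matching weight versus qubit-level Pauli weight) to the perfect-final-round assumption and the circuit-level extension of Lemma~\ref{lemma:MWPM}, so your acknowledged obstacle there is shared rather than distinctive.
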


\begin{proof}
 Suppose that $E$ has weight $\mathrm{wt}(E) \leq \lfloor(d-1)/2\rfloor$ and is within the guaranteed decoding radius of MWPM.
By the optimality property of MWPM, the initial estimate $\hat{E}$ satisfies:
$\mathrm{wt}(\hat{E}) \leq \mathrm{wt}(E) \leq \left\lfloor\frac{d-1}{2}\right\rfloor$.
By the non-increasing weight property of IMWPM in Lemma~\ref{lemma:nonincreasing}, after one iteration we have estimate $\hat{E}'$ satisfying:
$\mathrm{wt}(\hat{E}') \leq \mathrm{wt}(\hat{E}) \leq \left\lfloor\frac{d-1}{2}\right\rfloor$.

Since both $\hat{E}$ and $\hat{E}'$ match the syndrome of $E$, the product $\hat{E} \hat{E}'$ has trivial syndrome. Moreover,
$\mathrm{wt}(\hat{E} \hat{E}') \leq \mathrm{wt}(\hat{E}) + \mathrm{wt}(\hat{E}') \leq 2\left\lfloor\frac{d-1}{2}\right\rfloor< d$.
By the definition of the code distance $d$,   $\hat{E} \hat{E}'$ is a stabilizer.
Hence, $\hat{E}$ and $\hat{E}'$ differ by a stabilizer and yield the same logical correction. 

It follows that the IRMWPM decoder produces the same logical outcome as the MWPM decoder when $\mathrm{wt}(E) \leq \left\lfloor \frac{d-1}{2} \right\rfloor$.

 \end{proof}

\section*{Acknowledgement}
 YT and XW were supported by the National Natural Science Foundation of China (Grant No. 92265208).
 CYL was supported by the National Science and Technology Council in Taiwan, under Grants No. 113-2221-E-A49-114-MY3, No. 113-2119-M-A49-008-, and No. 114-2119-M-A49-006-.

\bibliography{quantum}

\end{document}